\documentclass[12pt]{article}

\usepackage{amssymb, amsmath, amsthm}
\usepackage{graphicx,subfig,tikz}
\usepackage{cite}
\usepackage[left=2cm,right=2cm,top=2cm,bottom=2cm]{geometry}

\newtheorem{theorem}{Theorem}
\newtheorem{lemma}{Lemma}

\newtheorem{cor}{Corollary}
\newtheorem{prop}{Proposition}

\usepackage[colorlinks=true,linktocpage=true,allcolors=blue]{hyperref}

\def\be{\begin{equation}}
\def\ee{\end{equation}}
\def\bea{\begin{eqnarray}}
\def\eea{\end{eqnarray}}

\newcommand{\td}{\text{d}}

\title{ \bf{Uniqueness of supersymmetric AdS$_5$ black holes with $SU(2)$ symmetry}}

\author{James Lucietti\footnote{j.lucietti@ed.ac.uk}\; and Sergei G. Ovchinnikov\footnote{s.g.ovchinnikov@sms.ed.ac.uk} 
\\ \\ \small \sl School of Mathematics and Maxwell Institute for Mathematical Sciences, \\ \small \sl    University of Edinburgh, King's Buildings, Edinburgh, EH9 3JZ, UK }

\date{}

\begin{document}

\maketitle

\begin{picture}(0,0)(0,0)
\put(350, 240){}
\put(350, 225){}
\end{picture}

\begin{abstract} 
We prove that any  supersymmetric solution   to five-dimensional minimal gauged supergravity with $SU(2)$ symmetry, that  is timelike outside an analytic horizon, is a Gutowski-Reall black hole or its near-horizon geometry.  The proof combines a delicate near-horizon analysis with the general form for a  K\"ahler metric with cohomogeneity-1 $SU(2)$ symmetry.
We also prove that any timelike supersymmetric soliton solution to this theory, with $SU(2)$ symmetry and a nut or a complex bolt,  has a K\"ahler base with enhanced $U(1)\times SU(2)$ symmetry, and we exhibit a family of asymptotically AdS$_5/\mathbb{Z}_p$ solitons for $p \geq 3$ with a bolt in this  class. 
\end{abstract}

\newpage

\tableofcontents
\section{Introduction}

Black holes in Anti de Sitter (AdS) spacetime are of central importance in the AdS/CFT correspondence~\cite{Witten:1998zw}.   A fundamental problem in this context is to provide a microscopic derivation of the Bekenstein-Hawking entropy from the dual CFT.   It is expected this problem is more technically tractable for supersymmetric black holes, as is the case for their flat space counterparts~\cite{Strominger:1996sh, Breckenridge:1996is}. In recent years, there has been striking progress on the CFT side of this problem for AdS$_4$ black holes and most recently for AdS$_5$ black holes~\cite{Benini:2015eyy, Cabo-Bizet:2018ehj, Choi:2018hmj, Benini:2018ywd} (see also the review~\cite{Zaffaroni:2019dhb}). Naturally, a full resolution to this problem ultimately also requires a complete classification of black holes in AdS.

The classification of equilibrium black holes in AdS spacetime turns out to be even more nontrivial than in flat space. Indeed, very few general results in this direction are even known. One obvious source of complication arises because the conformal boundary at null infinity can be taken to be any Lorentzian metric giving rise to a large class of so-called asymptotically locally AdS spacetimes. However, even restricting to asymptotically globally AdS spacetimes, so the conformal boundary is the Einstein static universe, the classification problem remains poorly understood. Notably, even in  four-dimensions, analogues of the black hole uniqueness theorem for the Kerr solution are not known for AdS spacetimes. 

The Kerr-AdS solution and its higher-dimensional counterpart~\cite{Hawking:1998kw, Gibbons:2004uw} are the only known family of vacuum black holes in global AdS. In  $D=4,5$ gauged supergravity,  charged versions of the Kerr-AdS family are known~\cite{Kostelecky:1995ei, Cvetic:2005zi, Chong:2005hr, Wu:2011gq}, including supersymmetric black holes~\cite{Gutowski:2004ez, Gutowski:2004yv, Kunduri:2006ek}. On the mathematical side,  the major obstacle to constructing and classifying solutions is that the presence of a cosmological constant spoils integrability of the Einstein equations for stationary and axisymmetric solutions, which no longer reduce to a harmonic map.  Furthermore, on the physical side, it is even expected that the no-hair theorem is violated by the existence of rotating black holes with a single (corotating) Killing field, which may arise as the endpoint of a superradiant instability\cite{Kunduri:2006qa, Cardoso:2006wa, Dias:2015rxy}.

In this paper we will consider the classification of black holes in global AdS$_5$, under the additional assumption of supersymmetry. In particular, we will consider supersymmetric backgrounds of $D=5$ minimal gauged supergravity, which is the simplest theory that admits such solutions. Supersymmetry typically heavily constrains the possible backgrounds and therefore offers a setting where the classification of black holes may be tractable. 
The most general known asymptotically global AdS$_5$ black hole solution to $D=5$ minimal gauged supergravity was found by Chong, Cvetic, Lu, Pope (CCLP), and is a four parameter family of topologically $S^3$ black holes~\cite{Chong:2005hr}. It admits a two-parameter supersymmetric limit which carries two independent angular momenta $J_1, J_2$, electric charge $Q$ and a nonlinear constraint relating these (the mass is fixed by the BPS relation). This raises an obvious question:  are there other asymptotically global AdS$_5$ supersymmetric black holes in this theory?\footnote{The possibility of hairy supersymmetric black holes in other truncations of supergravity has recently been investigated~\cite{Markeviciute:2018yal, Markeviciute:2018cqs}.}

 It is instructive to compare to the analogous classification problem for asymptotically flat supersymmetric black holes in $D=5$ minimal (ungauged) supergravity.  For some time the only known solutions were the BMPV black hole~\cite{Breckenridge:1996is} and a black ring~\cite{Elvang:2004rt}.  Recently, new examples of supersymmetric holes in this context have been discovered: black holes with lens space topology (black lens)  and black holes with topological bubbles (2-cycles) in the domain of outer communication~\cite{Kunduri:2014iga, Kunduri:2014kja, Tomizawa:2016kjh, Horowitz:2017fyg, Breunholder:2018roc}.  Furthermore, a classification theorem has been established for such solutions under the additional assumption of biaxial symmetry~\cite{Breunholder:2017ubu}. 
 This reveals a rich moduli space of black holes with $S^3$, $S^2\times S^1$ and lens space $L(p,1)$ horizon topology that are roughly enumerated by the number of black holes and number of 2-cycles  in the exterior region.  It is natural to wonder if there are asymptotically AdS$_5$ solutions of this kind. 
 
 Supersymmetric black holes are extremal and therefore admit a well-defined near-horizon geometry that itself is a solution.  The classification of supersymmetric near-horizon geometries with compact horizons has been achieved in this theory~\cite{Kunduri:2006uh,Grover:2013hja}. It shows that the most general  regular solution is locally isometric to the near-horizon geometry of the CCLP black hole. In particular, this proves the horizon must have $S^3$ or lens space topology.  Thus there are no supersymmetric black rings in AdS$_5$.\footnote{It is worth noting that in the more general $U(1)^3$ gauged supergravity one cannot rule out supersymmetric black rings as there do exist supersymmetric near-horizon geometries with ring topology~\cite{Kunduri:2007qy}. } This is striking as supersymmetric black rings in flat space do exist.  
Nevertheless, comparison to the classification in flat space still leaves the following questions: Are there black holes with lens space topology in AdS$_5$? Are there black holes with 2-cycles in the exterior  in AdS$_5$?  Are there multi-black holes in AdS$_5$? 

We emphasise that these questions are not in contradiction with the near-horizon uniqueness theorem, since that does allow for lens space topology horizons and does not say anything about the topology of the exterior region. Indeed,  as the classification of supersymmetric black holes in flat space shows, there can be many solutions with (locally) isometric near-horizon geometries. Furthermore, it is worth noting that the supersymmetric CCLP solution reduces to the BMPV solution in the limit of vanishing cosmological constant. For these reasons, it is therefore far from clear that CCLP is the only AdS$_5$  black hole solution of $D=5$ minimal gauged supergravity. 

Unfortunately, even the classification of supersymmetric solutions in AdS is difficult. For $D=5$ minimal gauged supergravity this reduces to a finding a 4d K\"ahler geometry that solves a complicated 4th order nonlinear PDE for its curvature~\cite{Gauntlett:2003fk,Cassani:2015upa}. Therefore a general local classification of solutions is not currently available.  Hence it is natural to seek further symmetry assumptions that are compatible with spacetimes in this class. Asymptotically globally AdS$_5$ spacetimes  have an $SO(4)$ rotation group acting on the spatial $S^3$ at infinity. If one assumes spherical $SO(4)$ symmetry the solution must be static and hence is a supersymmetric limit of the Reissner-Nordstr\"om-AdS$_5$ solution which is nakely singular. A natural class to investigate is solutions that are invariant under an abelian $U(1)^2\subset SO(4)$ rotational symmetry, which in particular contains the CCLP solution. This would also be the AdS analogue of the aforementioned classification recently obtained for asymptotically flat supersymmetric black hole solutions~\cite{Breunholder:2017ubu}. Unfortunately, this appears to be a difficult problem in toric K\"ahler geometry.

In five-dimensions there is another notion of axisymmetry.  Namely, solutions that preserve an $SU(2)$ subgroup of $SO(4) \cong (SU(2)\times SU(2)) /\mathbb{Z}_2$. In fact, the first example of a supersymmetric black hole in AdS$_5$ was found by Gutowski and Reall~\cite{Gutowski:2004ez}, which corresponds to a one-parameter subfamily of CCLP characterised by having equal angular momenta $J_1=J_2$ and an enhanced rotational symmetry $U(1)\times SU(2)$, where $SU(2)$ acts with 3d orbits. 
A simple question therefore presents itself: can we classify all supersymmetric solutions that also admit such an $SU(2)$-symmetry? This reduces to classifying K\"ahler metrics with a cohomogeneity-1 $SU(2)$ symmetry, which is an ODE problem and therefore much more tractable.
Our main result is given by the following  uniqueness theorem.

\begin{theorem} \label{theorem}
\label{thm}Any supersymmetric and $SU(2)$-symmetric solution to five-dimensional minimal gauged supergravity, that is timelike  outside an analytic horizon with compact cross-sections, must be a Gutowski-Reall black hole or its near-horizon geometry. 
\end{theorem}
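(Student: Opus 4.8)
\subsection*{Proof strategy}

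The plan is to combine the canonical local form of a timelike supersymmetric solution of $D=5$ minimal gauged supergravity with the general form of a cohomogeneity-one $SU(2)$-invariant K\"ahler metric, reduce the field equations to an ODE system in a radial variable, and then fix the integration data using the near-horizon geometry. Recall that a timelike supersymmetric solution is determined by a four-dimensional K\"ahler base $(B,h,X)$ together with a function $f$ and a one-form $\omega$, the latter two being fixed (algebraically or by first-order equations) in terms of the base geometry, the remaining content of the equations of motion being a single fourth-order PDE for the curvature of $h$~\cite{Gauntlett:2003fk,Cassani:2015upa}. Imposing $SU(2)$ symmetry forces $h$, $X$, $f$ and $\omega$ to be $SU(2)$-invariant; since the orbits are three-dimensional the base is cohomogeneity-one and $h$ may be written in a normal form built from the left-invariant one-forms $\sigma_1,\sigma_2,\sigma_3$ of $SU(2)$ and a few functions of a radial coordinate $r$, with the K\"ahler condition imposing relations among these functions (and potentially, for a black hole, forcing an enhancement of the base isometry to $U(1)\times SU(2)$). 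Substituting this ansatz turns the curvature PDE into a closed ODE system for the remaining radial functions and for $f$.

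Next I would analyse the horizon. Supersymmetry implies extremality, so there is a well-defined near-horizon geometry, which is itself a timelike supersymmetric $SU(2)$-symmetric solution with a degenerate analytic horizon and compact cross-sections. Feeding this into the classification of supersymmetric near-horizon geometries in this theory~\cite{Kunduri:2006uh,Grover:2013hja} and restricting to the $SU(2)$-symmetric case, the near-horizon geometry must be locally isometric to that of the Gutowski--Reall black hole~\cite{Gutowski:2004ez}; in particular the cross-sections are $S^3$ or a quotient thereof. The delicate part is that the K\"ahler description degenerates at the horizon --- the supersymmetric Killing field becomes null there and the function $f$ and the invariant metric functions degenerate --- so one must introduce a Gaussian-null-type coordinate transverse to the horizon, relate it to the radial coordinate $r$ of the invariant base, and derive from the known near-horizon data a series expansion of Frobenius type for all the radial functions as $r$ approaches its horizon value. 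Analyticity of the horizon guarantees these expansions converge, and it restricts the admissible exponents, fixing all but finitely many of the expansion coefficients.

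I would then integrate the ODE system subject to this boundary data. The horizon is a singular point of the reduced system, and the regularity and analyticity conditions there leave only a finite-dimensional family of local solutions extending the Gutowski--Reall near-horizon geometry; the key step is to show that this family is exactly one-dimensional and coincides with the Gutowski--Reall black hole, which one checks directly solves the ODE system, has the assumed $SU(2)$ symmetry, and is timelike outside its horizon. The degenerate branch, in which the radial coordinate does not extend off the horizon, reproduces the near-horizon geometry itself and accounts for the ``or its near-horizon geometry'' alternative in the statement. Along the way one must exclude, for a black hole, exotic degenerations of the $SU(2)$ orbits (nuts or complex bolts) and isometry enhancements beyond the $U(1)\times SU(2)$ of Gutowski--Reall; the solitonic possibilities are genuinely different and are treated separately.

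The step I expect to be the main obstacle is the near-horizon analysis together with the matching: controlling the degeneration of the cohomogeneity-one K\"ahler data at an analytic extremal horizon, verifying that $r$ remains a good coordinate up to the horizon, and extracting from the near-horizon classification the full boundary data --- including the subleading data needed to seed the ODE integration and to show that no spurious moduli survive beyond the one-parameter Gutowski--Reall family. A secondary difficulty is organising the reduced equations, which descend from a fourth-order nonlinear PDE, into a form to which a clean uniqueness argument for ODEs applies, and confirming that the $SU(2)$-invariant K\"ahler normal form used is genuinely the most general one.
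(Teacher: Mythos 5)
Your overall architecture does match the paper's: canonical timelike form $(f,\omega,h)$ with a K\"ahler base, reduction under cohomogeneity-one $SU(2)$ symmetry to an ODE system, identification of the near-horizon geometry with the Gutowski--Reall one via the near-horizon classification, a Gaussian-null matching at the horizon, and an analyticity-based series argument. However, two steps that you treat as routine or merely assert are where the actual content lies, and as written they leave genuine gaps.

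First, the enhancement of the base symmetry to $U(1)\times SU(2)$ is not something the K\"ahler condition ``potentially'' imposes, nor an exotic possibility to be ``excluded''; it is the pivotal result that must be \emph{proved} before any tractable uniqueness argument exists, and it requires a specific global mechanism. In the paper, the general $SU(2)$-invariant K\"ahler metric is shown to be diagonalisable with $2bca'=b^2-a^2+c^2$, $2acb'=a^2-b^2+c^2$, so $T=b/a$ obeys $cT'=1-T^2$; the near-horizon analysis shows the horizon is a conical point of the base with $a,b,c\propto\rho$ and $T\to 1$ (because the Gutowski--Reall near-horizon geometry is $U(1)\times SU(2)$-invariant), and a monotonicity argument for $T$ then forces $a=b$ for all $\rho>0$, after which a separate lemma upgrades the enhancement from the base to the full spacetime. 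Without this, the reduced system involves five functions $(a,b,c,\lambda_1,\lambda_2)$ coupled to a complicated fifth-order ODE, and your plan to ``integrate the ODE system subject to this boundary data'' has no uniqueness mechanism; moreover, the frame rotation needed to pass from Gaussian null coordinates to the invariant form of the base is not even continuous at the horizon when $\mathring{h}_3\neq 0$, so subleading expansion coefficients are not under control in the generic $SU(2)$ case --- one only gets the leading conical behaviour, which is precisely enough to seed the $T$-argument but not enough to run a Frobenius analysis directly.

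Second, your key assertion that the local solution family extending the near-horizon data ``is exactly one-dimensional'', with analyticity ``fixing all but finitely many of the expansion coefficients'', understates what must be shown and how. Once $U(1)\times SU(2)$ symmetry is in hand, everything reduces to a single fifth-order ODE for $V(r)$, and this ODE \emph{does} admit other analytic-in-$r$ solutions near $r=0$ (the branches with $V_0=1$ and $V_0=32/11$, the latter corresponding to the numerically known squashed black holes whose horizons turn out to be only $C^1$). What singles out Gutowski--Reall is the combination of (i) smoothness of the horizon forcing $V$ to be a function of $r^2$ (so all odd coefficients vanish, in particular $V_3=0$), (ii) positivity of $f$ forcing $V_0>1$, and then (iii) an explicit induction on Taylor coefficients showing every coefficient beyond $V_0$ and $V_2$ vanishes, so $V$ is exactly the quadratic polynomial giving the Gutowski--Reall black hole ($V_2>0$) or its near-horizon geometry ($V_2=0$). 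Without an argument of this kind your strategy leaves open both the other analytic branches and the possibility of a larger moduli of analytic solutions with the correct leading data; analyticity at the horizon is used essentially here, not just as a convenience.
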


 We believe this is the first uniqueness theorem for  black holes in  AdS in  dimension $D\geq 4$, supersymmetric or otherwise.\footnote{Naturally, in three-dimensions black hole uniqueness results are known~\cite{Li:2013pra}.}  We emphasise that no global assumptions on the spacetime are made other than the supersymmetric Killing field is timelike outside the horizon. In particular, our proof does not make any assumptions about the asymptotics and therefore also rules out the existence of asymptotically locally AdS$_5$ supersymmetric black holes in this class (other than quotients of the Gutowski-Reall solution). In fact, a class of supersymmetric black holes with $SU(2)$ symmetry that are asymptotically locally AdS$_5$ with squashed $S^3$ spatial boundary metrics have been constructed numerically~\cite{Blazquez-Salcedo:2017kig, Blazquez-Salcedo:2017ghg, Cassani:2018mlh, Bombini:2019jhp}. Our results show that these solutions do not possess smooth horizons.  In fact, we will show  that the horizons of such solutions are $C^1$ but not $C^2$.

The structure of the proof is as follows. First we show that any K\"ahler metric with a cohomogeneity-1 $SU(2)$ symmetry is described by a simple system of ODEs, which generalises that of Dancer and Strachan for K\"ahler-Einstein metrics with $SU(2)$ symmetry~\cite{Dancer:1993aw}. 
On the other hand, by a delicate near-horizon analysis we show that a horizon corresponds to a conical singularity of the K\"ahler base. This provides boundary conditions for the ODE system that governs the K\"ahler base, which imply that it must have an enhanced  $U(1)\times SU(2)$ symmetry everywhere outside the horizon. Finally, we show that the classification of $U(1)\times SU(2)$-invariant supersymmetric solutions reduces to a single 5th order nonlinear ODE and we are able to find all solutions to this that correspond to an analytic horizon. 

As part of our analysis, we also determine the general local form for a timelike supersymmetric solution to $D=5$ minimal gauged supergravity with a K\"ahler base with a  cohomogeneity-1 $SU(2)$ symmetry.  Furthermore, we derive the boundary conditions required for a K\"ahler metric in this class to contain a nut or bolt and use this to prove that any strictly timelike supersymmetric soliton spacetime with a nut or a complex bolt must have  enhanced $U(1)\times SU(2)$ symmetry. As an example,  we construct an explicit class of solitons with a bolt that are asymptotically locally AdS$_5$ with spatial $S^3/\mathbb{Z}_p$ boundary for $p\geq 3$.

This paper is organised as follows.  In Section \ref{sec:gen} we review the local form of supersymmetric backgrounds of $D=5$ minimal gauged supergravity and then examine the consequences of $SU(2)$ symmetry. In Section \ref{sec:su2kahler} we derive the general form for a K\"ahler metric with cohomogeneity-1 $SU(2)$ symmetry and establish a symmetry enhancement result for such metrics; this section is written in a self-contained way as it may be of independent interest. In Section \ref{sec:susysu2solns} we determine the general local form for a supersymmetric solution with $SU(2)$-symmetry and discuss the known examples of black hole and soliton solutions. In Section \ref{sec:BH} we analyse supersymmetric black holes with $SU(2)$ symmetry and prove the black hole uniqueness theorem. We close with a Discussion of our results and provide an Appendix.

\section{Supersymmetric solutions of gauged supergravity}
\label{sec:gen}

\subsection{General local classification}
The general local form for supersymmetric solutions $(M, g, F)$ to five-dimensional minimal gauged supergravity was determined in~\cite{Gauntlett:2003fk} (we work in the conventions of~\cite{Gutowski:2004yv}). We will briefly recall it here. For timelike solutions, which are defined by the supersymmetric Killing field $V$ being timelike, the metric can be written as
\be\label{metricform}
g = -f^2( \td t+\omega)^2+ f^{-1} h\; ,
\ee
where $V=\partial_t$, the metric $h$ on the orthogonal base $B$ is K\"ahler, and $f$ and $\omega$ are a function and 1-form on $B$. The Maxwell field takes the form
\be
F = \frac{\sqrt{3}}{2} \td (f( \td  t+\omega)) - \frac{1}{\sqrt{3}} G^+ - \frac{\sqrt{3}}{\ell f} J \; ,  \label{maxwell}
\ee
where $G^\pm = \tfrac{1}{2}f ( \td\omega \pm \star_B \td\omega )$, $J$ is the K\"ahler form of $(B, h)$ and the orientation on $B$ is such that $J$ is anti-self dual (ASD), i.e. $\text{vol}_B= - \tfrac{1}{2} J \wedge J$.

Given a K\"ahler base, the following are completely fixed in terms of its curvature
\be
f^{-1} = - \frac{\ell^2}{24} R, \qquad G^+ = -\frac{\ell}{2} (\mathcal{R}- \tfrac{1}{4} J R) \; ,  \label{fGp}
\ee
where $\mathcal{R}_{ab}= \tfrac{1}{2} R_{ab}^{~~cd} J_{cd}$ is the Ricci form and $R_{abcd}$ is the Riemann tensor of $B$. However, as clarified in~\cite{Figueras:2006xx, Cassani:2015upa}, the K\"ahler base is not free to be chosen.

First, we recall that any K\"ahler surface, with ASD K\"ahler form $\Omega^3:= J$, admits a complex $(2,0)$ form $\Omega^1+ i \Omega^2$, with $\Omega^1, \Omega^2$ real ASD 2-forms,  which satisfy the quaternion algebra
\be
(\Omega^i)^a_{~c} (\Omega^j)^c_{~b} = -\delta^a_{~b}\delta_{ij} +\epsilon_{ijk}( \Omega^k)^a_{~b}  \;  , \label{quaternion}
\ee
for $i=1, 2,3$, and the differential equations
\be
\nabla_a \Omega^1_{ bc} = P_a \Omega^2_{ bc}, \qquad \nabla_a \Omega^2_{ bc}=- P_a \Omega^1_{ bc} \; ,  \label{X2X3P}
\ee
where $\nabla$ is the metric connection of $h$ and $P$ is the potential for the Ricci form $\mathcal{R}=\td P$. Observe that there is a local $O(2)$ freedom $\Omega^i \to O^i_{~j} \Omega^j$ which preserves the K\"ahler form $\Omega^3$ and generates gauge transformations of the Ricci form potential $P$.

In particular, since the $\Omega^i$ give a basis for ASD 2-forms we can expand
\be
G^-= \frac{\ell}{2 R}  \lambda_i \Omega^i  \; ,\label{Gm}
\ee
where 
\be
\lambda_3= \tfrac{1}{2} \nabla^2 R+ \tfrac{2}{3} R_{ab}R^{ab}- \tfrac{1}{3} R^2  \; .
\ee
Then, as shown in~\cite{Cassani:2015upa}, the integrability condition for
\be 
\td \omega= f^{-1}( G^++G^-) \; , \label{dom}
\ee
fixes $\lambda_1+i \lambda_2$ up to an antiholomorphic function on $B$ and
requires that the K\"ahler metric satisfies the following complicated 4th order PDE for its curvature,
\be
\nabla^c \Xi_c=0  \; ,\label{PDE}
\ee
where 
\be
\Xi_c:= \nabla_c \left(\frac{1}{2} \nabla^2 R+\frac{2}{3} R_{ab}R^{ab}- \frac{1}{3} R^2 \right)+ R_{cb}\nabla^b R \; .
\ee
Conversely, given a K\"ahler metric $h$ which satisfies  (\ref{PDE}) a supersymmetric solution can be reconstructed by solving (\ref{dom}) for $\omega$ and $f$ is determined by (\ref{fGp}).

To summarise, the classification of  timelike supersymmetric  solutions in this theory reduces to the classification of K\"ahler metrics that satisfy (\ref{PDE}).  It is worth noting that any K\"ahler-Einstein metric  satisfies (\ref{PDE}) and therefore gives a supersymmetric solution  with $f=1$ and $\td \omega$ is an ASD 2-form (normalised so $R_{ab}= -6 \ell^{-2} h_{ab}$).

\subsection{Solutions with $SU(2)$ symmetry}

We will now assume that $(M,g,F)$ also admits a $G$-symmetry in the following sense: (i) there is an  isometry group $G$ with spacelike orbits; (ii) the supersymmetric Killing field $V$ is complete and invariant under $G$ so that the spacetime isometry group is $\mathbb{R}_t\times G$ where $\mathbb{R}_t$ is generated by $V$; (iii) the Maxwell field is invariant under  $G$. 

We start by deducing the constraints on the data for a timelike solution $(f, \omega, h)$ imposed by such a $G$-symmetry. These are summarised by the the following result.
\begin{lemma}
\label{lem:Gsym}
A timelike supersymmetric  solution $(f, \omega, h)$ that admits a $G$-symmetry (as above), has a K\"ahler base metric $h$ with a holomorphic $G$-symmetry and $f, \omega$ are $G$-invariant.
\end{lemma}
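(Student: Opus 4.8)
The plan is to track how the $G$-action interacts with the canonical structures $(f,\omega,J)$ attached to a timelike supersymmetric solution. Since $V=\partial_t$ is assumed $G$-invariant and the spacetime isometry group is $\mathbb{R}_t\times G$, any $\xi\in\mathfrak{g}$ acts on $M$ by an isometry commuting with $V$. The metric decomposition $g=-f^2(\td t+\omega)^2+f^{-1}h$ is canonically determined by $V$: $f^{-2}=-g(V,V)$ is a scalar built $G$-equivariantly from $g$ and $V$, hence $G$-invariant; the orthogonal complement of $V$ defines the distribution on which $h$ lives, and $\td t+\omega$ is fixed (up to the choice of $t$) by requiring it to annihilate this distribution and satisfy $(\td t+\omega)(V)=1$. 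First I would make this precise: since $[\xi,V]=0$ and $\xi$ is Killing, $\xi$ preserves $f$, preserves the rank-4 distribution $V^\perp$, and its flow descends to the quotient $B=M/\mathbb{R}_t$ (or acts on a local slice $t=\const$), so $h$ is $G$-invariant as a metric on $B$ and $f$ descends to a $G$-invariant function on $B$. For $\omega$: $\mathcal{L}_\xi(\td t+\omega)$ is a $G$-equivariant 1-form annihilating $V^\perp$ and $V$, hence vanishes, so after possibly shifting $t$ we get $\mathcal{L}_\xi\omega=0$, i.e. $\omega$ is $G$-invariant on $B$.

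Next I would show the $G$-action on $B$ is holomorphic, i.e. preserves the complex structure $\mathcal{J}=h^{-1}J$ (equivalently the K\"ahler form $J$). The point is that $J$ is not quite canonically determined by $g$ and $V$ alone — it is determined by the full supersymmetric data, in particular by $F$ via (\ref{maxwell}). From (\ref{maxwell}), $F=\tfrac{\sqrt3}{2}\td(f(\td t+\omega))-\tfrac1{\sqrt3}G^+-\tfrac{\sqrt3}{\ell f}J$, and we have just shown $f$, $\omega$, $h$ are $G$-invariant, so $\td(f(\td t+\omega))$ and $G^+=\tfrac12 f(\td\omega+\star_B\td\omega)$ are $G$-invariant as well (the Hodge star on $B$ is built from $h$ and the orientation, which $\xi$ preserves since it acts by isometries isotopic to the identity on connected $G$). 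Assumption (iii) says $\mathcal{L}_\xi F=0$. Subtracting, we conclude $\mathcal{L}_\xi\bigl(\tfrac{\sqrt3}{\ell f}J\bigr)=0$, and since $f$ is $G$-invariant this forces $\mathcal{L}_\xi J=0$. As $\xi$ also preserves $h$, it preserves $\mathcal{J}=h^{-1}J$, i.e. the $G$-action on $(B,h,J)$ is holomorphic. (Alternatively one could argue via (\ref{fGp}) that $J$ is determined by the Ricci form $\mathcal R$ together with $R$ and $h$, but the cleanest route is through the Maxwell field since $G$-invariance of $F$ is a hypothesis.)

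The main obstacle — really the only subtlety — is the gauge/shift ambiguity in $\omega$ and $t$: a priori each $\xi$ could produce $\mathcal{L}_\xi\omega=\td c_\xi$ for some function $c_\xi$, corresponding to the freedom $t\mapsto t+(\text{function on }B)$ in the metric form (\ref{metricform}). One must check that this cocycle can be trivialised simultaneously for all of $G$, i.e. that there is a genuine $G$-invariant choice of the fibration coordinate $t$. For a connected group this follows by averaging: since $G$ acts with spacelike orbits and preserves $V$, the $\mathbb{R}_t$-action combines with $G$ to an $\mathbb{R}_t\times G$ action on $M$, and one fixes $\omega$ by requiring the connection 1-form $\td t+\omega$ for the $\mathbb{R}_t$-fibration to be $G$-invariant, which can always be achieved by averaging any local section over (a compact form of) $G$ — or, since we only need the statement locally on $B$ for the ODE reduction that follows, by noting the cocycle $c_\xi$ is closed on $\mathfrak g$ and exact after a choice of slice. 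I would spell this out just enough to conclude $\mathcal{L}_\xi\omega=0$; everything else is a direct consequence of $\xi$ being Killing and commuting with $V$. The remaining claims ($f$, $\omega$ $G$-invariant, $h$ $G$-invariant K\"ahler with holomorphic $G$) then follow as above.
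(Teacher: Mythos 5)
Your proposal is correct and follows essentially the same route as the paper: $G$-invariance of $f$ from $|V|^2$, of $h$ from the orthogonal projection, of $\omega$ after exploiting the shift freedom in $t$, and of $J$ (hence holomorphicity) by comparing the invariant pieces of the Maxwell field. The only slip is cosmetic ($f^2=-g(V,V)$, not $f^{-2}$), and your extra care about trivialising the cocycle $\mathcal{L}_\xi\omega=\td c_\xi$ simultaneously for all of $G$ is a finer point that the paper's proof simply absorbs into the stated gauge freedom.
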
 
\begin{proof}
Under these assumptions, it follows that $f^2=-g_{\mu\nu}V^\mu V^\nu$ is $G$-invariant. Therefore,  the K\"ahler metric on the orthogonal base,
\be
h_{\mu\nu}= f \left( g_{\mu\nu}-  \frac{V_\mu V_\nu}{|V|^2} \right) \; ,
\ee
is also $G$-invariant. Furthermore, since $V_\mu \td x^\mu = -f^2(\td t+\omega)$ is $G$-invariant, the gauge freedom $t\to t+\lambda, \omega\to \omega- \td \lambda$, where $\lambda$ is a function on the base, can be used to ensure $\omega$ is a $G$-invariant 1-form on the base space. Finally, notice that invariance of the Maxwell field (\ref{maxwell}) also implies that the K\"ahler form $J$ is $G$-invariant so the $G$-action is holomorphic.
\end{proof}

Now we will further restrict to $G$ with generic 3d orbits, where $G$  is $SU(2)$ or $U(1)\times SU(2)$.  Let $L_i$ and $R_i$,  $i=1,2,3$, denote the generators of the left and right action on $SU(2)$ respectively. In  Appendix \ref{app:su2} we recall the standard formulas of $SU(2)$ calculus and define our conventions. Without loss of generality, we assume the isometry group  $G=SU(2)$  is generated by the right-action vectors $R_i$ (i.e. the left-invariant vector fields). The right-invariant 1-forms $\sigma_i$, which are dual to $L_i$, obey
\be
\td \sigma_i = -\tfrac{1}{2} \epsilon_{ijk} \sigma_ j\wedge \sigma_k  \; .  \label{MCsu2}
\ee
Then, any $SU(2)$-invariant metric, in the sense $\mathcal{L}_{R_i} h=0$,  locally can  be written in the form
\be
h= \td \rho^2+ h_{ij}(\rho)\sigma_i \sigma_j \; ,   \label{su2metric}
\ee
where $h_{ij}:= h(L_i, L_j)$ is a positive-definite Gram-matrix and  $\rho$ is a local coordinate orthogonal to the $SU(2)$-orbits. It is worth noting that there is a global $SO(3)$ freedom acting on the right-invariant forms $\sigma_i \to R_{ij} \sigma_j$ where $R\in SO(3)$ is a constant rotation. In general, this can be used to diagonalise $h_{ij}(\rho)$ only at a point. We will refer to $h_{ij}(\rho)$ as diagonalisable if it can be diagonalised for all $\rho$ (in an appropriate domain).

The most general $SU(2)$-invariant 1-form can be written as
\be
\omega= \omega_i(\rho) \sigma_i  \; ,  \label{su2omega}
\ee
where we have exploited the gauge freedom in the definition of $\omega$ mentioned above to fix $\omega_\rho=0$. Clearly, invariance of the function $f$ implies that it can only depend on $\rho$.

Lemma \ref{lem:Gsym} also shows that that the K\"ahler form is $SU(2)$-invariant.  It is easy to show that the most general $SU(2)$-invariant closed 1-form must take the form
\be
J= \td ( g_i(\rho)\sigma_i)  \; .   \label{su2J}
\ee
Furthermore, the condition that $J$ be ASD is equivalent to
\be
g_i'= \frac{h_{ij} g_j}{\sqrt{\det h}}  \; , \label{JASD}
\ee
where we choose the orientation of the base to be $\td \rho \wedge \sigma_1\wedge \sigma_2 \wedge \sigma_3$.
We also require that $J$ defines an almost complex structure, i.e. $J^a_{~c} J^c_{~b} =-\delta^a_{~b}$.  It is straightforward to show that this condition reduces to
\be
h_{ij} g_i g_j= \det h \;,\label{JACS}
\ee
upon use of the ASD condition.

We require the necessary and sufficient conditions for (\ref{su2metric}) to be a K\"ahler metric with  K\"ahler form (\ref{su2J}). Imposing that $J$ is a closed ASD 2-form that defines an almost complex structure as above is not sufficient in general. One must also require that $J$ is an integrable complex structure. For a diagonal $h_{ij}(\rho)$ these conditions were obtained by Dancer and Strachan in a study of K\"ahler-Einstein metrics with $SU(2)$-symmetry~\cite{Dancer:1993aw}.  For Einstein metrics it turns out one can show that $h_{ij}(\rho)$ can always be diagonalised. However, in general this need not be the case, therefore we must consider the most general nondiagonal $h_{ij}(\rho)$.  It is convenient to perform this calculation in an orthonormal frame. We will present this calculation in Section \ref{sec:su2kahler}.

Now consider $G=U(1) \times SU(2)$ where $U(1)$ is generated by a subgroup of the left-action and $SU(2)$ by the right-action as above. This may be viewed as a special case of the $G=SU(2)$ case above that is also invariant under a $U(1)$ of the left-action, which without loss of generality we take to be generated by $L_3$. Then the general form for an invariant metric takes the form (\ref{su2metric}) with the further restriction $\mathcal{L}_{{L}_3}h=0$, which implies
\be
h= \td \rho^2+ a(\rho)^2( \sigma_1^2+ \sigma_2^2)+ c(\rho)^2 \sigma_3^2  \label{hsu2u1}  \; .
\ee
This is a special case of the diagonal metric.  Invariance of the K\"ahler form $\mathcal{L}_{L_3}J=0$ implies $g_1=g_2=0$ and therefore (\ref{JASD}), (\ref{JACS}) reduce to
\be
c= (a^2)', \qquad J = \td (a(\rho)^2 \sigma_3) \; ,   \label{Ju1su2}
\ee
where we have fixed an overall sign in $J$.
In this case, it turns out these conditions are sufficient to ensure $J$ is an integrable complex structure and hence $(h,J)$ is a K\"ahler structure.
Indeed, this is precisely the class of K\"ahler bases considered in~\cite{Gutowski:2004yv}.
The moment map defined by the $U(1)$-symmetry generated by ${L}_3$ is
\be
\td \mu= - \iota_{{L}_3}J= a^2  \; .
\ee
It is natural to define a new coordinate from the moment map by  $r:=2a(\rho)$ in terms of which 
\bea
&&h= \frac{\td r^2}{V(r)} + \frac{r^2}{4} (\sigma_1^2+\sigma_2^2) + \frac{r^2 V(r)}{4} \sigma_3^2 \; , \label{EHgauge} \\
&&J= \td \left(\tfrac{1}{4}  r^2 \sigma_3 \right) \; ,
\eea
where $V(r):= 4 a'(\rho)^2$. We remark that $a(\rho)$ must be a nonconstant function, otherwise $c(\rho)=0$ and the metric is degenerate, so one can always introduce the local coordinate $r$. Therefore this represents the most general $U(1)\times SU(2)$ invariant K\"ahler structure.  Finally, note that the most general $U(1)\times SU(2)$ invariant 1-form can be written as
\be
\omega= \omega_3(r) \sigma_3  \; , \label{omsu2u1}
\ee
where as above we have used the gauge freedom in its definition to set $\omega_r=0$.

For orientation, it is worth noting how global AdS$_5$ is described. The K\"ahler base for this is the Bergmann metric, which is an Einstein metric with $U(1)\times SU(2)$ symmetry  given by $a= \frac{\ell}{2} \sinh (\rho /\ell)$~\cite{Gauntlett:2003fk}. In terms of the $r$-coordinate this corresponds to
\be
V= 1+ \frac{r^2}{\ell^2}  \label{VAdS}
\ee
and the rest of the data for AdS$_5$ is simply
\be
f=1, \qquad \omega = \frac{r^2}{2\ell}\sigma_3  \; . \label{fomAdS}
\ee

\section{Classification of K\"ahler metrics with $SU(2)$ symmetry}
\label{sec:su2kahler}

In this section we will derive the general form of a cohomogeneity-1 K\"ahler metric with $SU(2)$ symmetry.  We  emphasise that we do not assume the metric on the surfaces of transitivity is diagonalisable and therefore obtain a generalisation of the conditions derived by Dancer and Strachan for diagonal metrics~\cite{Dancer:1993aw}.  We will also obtain some results on the global analysis of such geometries. This section may be of independent interest, so we give a self-contained presentation. We give our $SU(2)$ conventions in Appendix \ref{app:su2}.

For clarity, in this section we denote the K\"ahler 2-form by $\Omega$ and the complex structure by $J$ (in the rest of the paper we denote them both by $J$ since we work in conventions  compatible with raising and lowering indices with the K\"ahler metric, i.e. $\Omega_{ab}= h_{ab} J^b_{~c}$ etc).

\subsection{Local geometry}

It is convenient to introduce an $SU(2)$-invariant orthonormal frame $e^0= \td \rho$,  $e^i= E^i_{~j}(\rho) \sigma_j$, where $E^i_{~j}$ is an invertible matrix, so that a general $SU(2)$-invariant metric (\ref{su2metric}) is 
\be
h= (e^0)^2 +e^i e^i,
\ee
oriented so $e^0\wedge e^1\wedge e^2\wedge e^3$ is positive.  Then a basis of ASD 2-forms $\Omega^i$  is given by
\be
\Omega^i = e^0\wedge e^i -\frac{1}{2} \epsilon_{ijk} e^j \wedge e^k  \; . \label{OmASD}
\ee
Note that these satisfy the quaternion algebra (\ref{quaternion}) and are manifestly $SU(2)$-invariant.  
The frame $e^i$ is defined only up to a local $O(3)$ transformation $e^i \to O^i_{~j}(\rho) e^j$ where $O\in O(3)$.  Under this the basis of ASD 2-forms transforms as $\Omega^i\to O^i_{~j} \Omega^j$ provided $\det O=1$. Therefore, without loss of generality we may use this freedom to fix the K\"ahler form to be\footnote{Dancer and Strachan take $\Omega= A_i(\rho) \Omega_i$ since they work in a frame in which the metric is diagonal and therefore can't use the $SO(3)$ symmetry to rotate the K\"ahler form to (\ref{kahler}).}
\be
\Omega= \Omega^3  \; .  \label{kahler}
\ee
Furthermore, $\Omega^1, \Omega^2$ must then satisfy (\ref{X2X3P}).

Next, it is convenient to introduce the following frame
\be
e^0= \td \rho, \qquad e^1=a \sigma_1+ b_1 \sigma_2+ c_1 \sigma_3, \qquad e^2 = b \sigma_2+ c_2 \sigma_3, \qquad e^3= c \sigma_3  \; ,  \label{nonDframe}
\ee
where $a, b, c, b_1, c_1, c_2$ are functions of $\rho$, which parameterises the most general $SU(2)$-invariant metric provided $abc \neq 0$ (to see this note that any positive-definite symmetric matrix, such as the metric $h_{ij}$, can be factored as $h_{ij}=(E^T E)_{ij}$ where $E$ is upper triangular).   We are now ready to state the main result of this section.

\begin{theorem}\label{th:kahler}
 The most general K\"ahler metric with a cohomogeneity-1 $SU(2)$ symmetry can be written in the frame (\ref{nonDframe}) where $b_1=c_1=c_2=0$,  so in particular is diagonal, and
 \bea
 &&2 bc a'=  b^2-a^2+c^2 \; , \label{ap}\\
  &&2ac b'= a^2-b^2+c^2 \label{bp}  \; .
  \eea
 The K\"ahler form  is simply
\be
\Omega=\td (a b \sigma_3).  \label{su2Om}
\ee
\end{theorem}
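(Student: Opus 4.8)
The plan is to impose on the most general $SU(2)$-invariant metric the two conditions characterising a K\"ahler structure: closedness of the K\"ahler form $\Omega$ and integrability of its almost complex structure $J$. I write $\Omega=\lambda_i\Omega^i$ in terms of the ASD $2$-forms (\ref{OmASD}) (the $\lambda_i$ being functions of $\rho$ by invariance), note that $J^2=-1$ forces $\lambda_i\lambda_i=1$, and use the $SO(3)$ freedom rotating the $\Omega^i$ to set $\Omega=\Omega^3=e^0\wedge e^3-e^1\wedge e^2$; the metric is taken in the frame (\ref{nonDframe}).

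First I would impose $\td\Omega=0$. Using the Maurer--Cartan relations (\ref{MCsu2}) — and $\td(\sigma_i\wedge\sigma_j)=0$, which follows from them — this reduces to $c=(ab)'$ together with the statement that $ac_2$ and $b_1c_2-c_1b$ are constants. Then I would impose integrability of $J$ via the criterion that the $(1,0)$-coframe $\theta^1:=e^0+ie^3$, $\theta^2:=e^1-ie^2$ generate a differential ideal, i.e.\ $\td\theta^1\wedge\theta^1\wedge\theta^2=\td\theta^2\wedge\theta^1\wedge\theta^2=0$ (one checks $\Omega^3=\tfrac i2(\theta^1\wedge\bar\theta^1+\theta^2\wedge\bar\theta^2)$, so these are genuinely $(1,0)$). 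A short calculation with (\ref{MCsu2}) shows the first condition is equivalent to $c_1=c_2=0$, so that $e^1=a\sigma_1+b_1\sigma_2$, $e^2=b\sigma_2$, $e^3=c\sigma_3$ (and the ``constants'' above are then zero).

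Writing now $\theta^2=p\,\sigma_1+q\,\sigma_2$ with $p=a$, $q=b_1-ib$, the second integrability condition reduces — again via (\ref{MCsu2}) — to the single complex ODE $ic\,(pq'-p'q)=p^2+q^2$. The key manoeuvre is to set $z:=p+iq$, $w:=p-iq$, so that $p^2+q^2=zw$ and the condition becomes $c\,\tfrac{\td}{\td\rho}\log(z/w)=2$. Since $c$ is real, its imaginary part forces $\arg(z/w)$ to be a constant $\gamma$, and a direct calculation gives $2h_{12}/(h_{11}-h_{22})=\tan\gamma$, a $\rho$-independent quantity; hence there is a single constant rotation of the $\sigma_i$ (about the $\sigma_3$-axis, which preserves $\Omega^3$) making $h_{ij}(\rho)$ diagonal, and in the corresponding frame $b_1=c_1=c_2=0$ in (\ref{nonDframe}), with $e^1=a\sigma_1$, $e^2=b\sigma_2$, $e^3=c\sigma_3$. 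In this frame closedness is $c=(ab)'$, and (since now $z=a+b$, $w=a-b$) the integrability relation becomes $c(a'b-ab')=b^2-a^2$; the two together are equivalent to the pair (\ref{ap}),(\ref{bp}). Finally $\Omega=\Omega^3=c\,\td\rho\wedge\sigma_3-ab\,\sigma_1\wedge\sigma_2=\td(ab\,\sigma_3)$ on using $c=(ab)'$. The converse — that any $a,b$ with $abc\neq0$ and $c:=(ab)'$ solving (\ref{ap}),(\ref{bp}) yields a K\"ahler metric — follows by retracing these steps.

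The step I expect to be the main obstacle is the integrability calculation: reducing $\td\theta^2\wedge\theta^1\wedge\theta^2=0$ in the invariant coframe to $ic(pq'-p'q)=p^2+q^2$ is mechanical but error-prone, and the real subtlety lies in passing to the diagonal frame. It is the $(z,w)$ substitution that exposes the fact that integrability forces $\arg(z/w)$ to be constant — which is precisely the condition allowing the cohomogeneity-$1$ metric $h_{ij}(\rho)$ to be simultaneously diagonalised by a single constant rotation, rather than merely pointwise.
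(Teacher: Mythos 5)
Your proposal is essentially correct and reaches the paper's equations, but by a somewhat different route, so let me compare. The paper imposes integrability through the commutator of the $(1,0)$ vector fields $\chi_0,\chi_1$, obtains the conditions (\ref{intconds}) together with the ODE (\ref{b1p}) for the off-diagonal function $b_1$, and then argues: diagonalise $h_{ij}$ at one point by a constant rotation, so $b_1$ vanishes there, and uniqueness for (\ref{b1p}) forces $b_1\equiv 0$ wherever $abc\neq 0$. You instead use the dual coframe criterion $\td\theta^A\wedge\theta^1\wedge\theta^2=0$; this does give exactly $c_1=c_2=0$ and the single complex relation $ic(pq'-p'q)=p^2+q^2$, whose real and imaginary parts are precisely the two conditions in (\ref{intconds}), so the two implementations of integrability are equivalent. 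Your substitution $z=p+iq$, $w=p-iq$ then repackages the paper's diagonalisation step in a nice way: constancy of $\arg(z/w)$ is equivalent to the eigendirections of the $(1,2)$-block of $h_{ij}$ being $\rho$-independent (your identity $\tan\gamma=2h_{12}/(h_{11}-h_{22})=2ab_1/(a^2-b^2-b_1^2)$ checks out), so a single constant rotation about $\sigma_3$, which preserves $\Omega^3$, diagonalises the metric for all $\rho$. The final reductions to (\ref{ap}), (\ref{bp}) and (\ref{su2Om}) are also correct.

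The one step you have not covered is the division by $zw$. Since $a,b>0$ may be assumed, $z=(a+b)+ib_1$ never vanishes, but $w=(a-b)-ib_1$ vanishes exactly where $a=b$ and $b_1=0$, and this locus is not exotic here: it is the enhanced $U(1)\times SU(2)$ locus (e.g.\ the Gutowski--Reall base has $a\equiv b$). On a domain containing a zero of $w$ the statement ``$\arg(z/w)$ is constant'' is not meaningful as written, so your argument does not by itself exclude, say, isolated zeros of $w$ separating intervals on which $\gamma$ takes different values, which would obstruct a single global rotation. The patch is easy: rewrite $\tfrac{c}{2}(z'w-zw')=zw$ as the linear ODE $czw'=(cz'-2z)w$ with $cz\neq 0$, so either $w\equiv 0$ (then $a\equiv b$, $b_1\equiv 0$ and the metric is already diagonal) or $w$ never vanishes, in which case your argument goes through; alternatively one can fall back on the paper's uniqueness argument for (\ref{b1p}). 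With that addendum, your proof is complete and matches the paper's conclusion.
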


\begin{proof}
First we will work in a general orthonormal frame $e^0= \td \rho, e^i = E^i_{~j}(\rho) \sigma_j$ and denote the dual basis by $X_0=\partial_\rho, X_i= (E^{-1})^{j}_{~i} L_j$, where $L_i$ are the right-invariant vector fields dual  to $\sigma_i$.  As argued above, we may assume the K\"ahler form $\Omega$ is given by (\ref{kahler}), that is,
\be
\Omega= e^0\wedge e^3- e^1\wedge e^2  \; .
\ee
Then the action of the almost complex structure $J$ on any vector $X$ can be found using $J X=- h^{-1}( \iota_X \Omega, \cdot)$, which gives
\be
J X_0 = -X_3 , \qquad J X_1 =X_2 , \qquad JX_2 =-X_1  \qquad JX_3 = X_0 \; .
\ee
We wish to impose that $J$ is a complex structure, i.e. that it is integrable.  A convenient method to do this is as follows~~\cite{Dancer:1993aw}.  Let $\chi_0, \chi_1$ be a basis of $(1,0)$-forms, i.e. $J \chi_{0,1} = i \chi_{0,1}$.  Then require that $[\chi_0, \chi_1]$ is also a $(1,0)$-form.  We choose
\be
\chi_0 = X_0 - i J X_0, \qquad \chi_1= X_1- i JX_1 \; ,
\ee
which are indeed always linearly independent.

Now, without loss of generality we parameterise the frame by (\ref{nonDframe}). First note that the requirement that the K\"ahler form $\Omega$ is closed is equivalent to
\be
c= (ab)' \; ,  \label{closedOm}
\ee
as well as certain first order ODEs for $c_1, c_2$ which we will not need.
The dual vectors for our orthonormal frame are
\bea
&&X_0= \partial_\rho, \qquad X_1= \frac{1}{a} L_1, \qquad X_2= \frac{1}{b} \left( L_2- \frac{b_1}{a} L_1 \right), \nonumber \\
&& X_3=\frac{1}{c} \left( L_3- \frac{c_2}{b} L_2 - \frac{1}{a} \left(c_1-\frac{b_1 c_2}{b} \right)L_1\right) \; .
\eea
 A computation gives
\be
[\chi_0, \chi_1]= k_i L_i
\ee
where
\be
k_1= -\frac{a'}{a^2}-\frac{1}{bc}+ i \left(\frac{b_1}{ab}\right)', \qquad k_2= \frac{i}{ac}+\frac{i b'}{b^2}-\frac{b_1}{abc}, \qquad k_3= \frac{ic_2-c_1}{abc}\; .
\ee
Since the $\rho$ component of $[\chi_0, \chi_1]$ vanishes, the integrability condition  $J[\chi_0, \chi_1]=i [\chi_0, \chi_1]$ is equivalent to
\be
a k_1+ b_1 k_2=i b k_2, \qquad k_3=0  \;.
\ee
Using the explicit values for $k_i$, one finds the integrability of $J$  reduces to 
\be
c( b a'-ab')+ a^2-b^2+b_1^2=0, \qquad  ac b_1'= b_1 ( c a'- 2 b ), \qquad c_1=c_2=0  \; .  \label{intconds}
\ee
Finally, note that $c_1=c_2=0$ satisfy the ODEs mentioned above that arise from the closure of $\Omega$, so this condition is now  equivalent to (\ref{closedOm}).  

We can express these conditions is a slightly more convenient form. Solving (\ref{closedOm}) and the first equation in (\ref{intconds}) for $a',b'$ gives
 \bea
 &&2 bc a'=  b^2-a^2+c^2- b_1^2,\\
 && 2ac b'= a^2-b^2+c^2 + b_1^2,
\eea
and substituting into the second equation in (\ref{intconds}) gives 
\be
2 abc b_1'= - b_1( a^2+3 b^2-c^2 +b_1^2)  \; . \label{b1p}
\ee
Now, recall that any $SU(2)$-invariant metric (\ref{su2metric}) can be diagonalised at a point by a  constant rotation $\sigma_i \to R_{ij} \sigma_j$ where $R\in SO(3)$. In terms of the frame (\ref{nonDframe}) this means we may always arrange $b_1=c_1=c_2=0$ for some value of $\rho$ in our domain. On the other hand (\ref{b1p}) immediately shows that if $b_1=0$ for some $\rho$ then $b_1=0$ for all $\rho$ in the domain where $abc \neq 0$.  We deduce that without loss of generality we may set $b_1=0$.

Therefore, we have shown that any K\"ahler metric defined by (\ref{nonDframe}) with K\"ahler form (\ref{kahler}), can be arranged to be diagonal, i.e. $b_1=c_1=c_2=0$, where  (\ref{ap}), (\ref{bp}) satisfied. It is easy to show that the K\"ahler form can be written as (\ref{su2Om}).
\end{proof}

It is worth  remarking that our theorem  reduces to that of Dancer and Strachan for Einstein metrics with a cohomogeneity-1 $SU(2$) symmetry~\cite{Dancer:1993aw}.  Interestingly, we have shown that even without the Einstein condition, one can always choose a diagonal metric. Furthermore, setting $a^2=b^2$ it is easy to see it reduces to the $U(1)\times SU(2)$ invariant case described by (\ref{hsu2u1}) and (\ref{Ju1su2}). On the other hand, setting $2ab c'= a^2+b^2-c^2$ gives a hyper-K\"ahler metric;  indeed, this is equivalent to $a=(bc)'$,  $b=(ac)'$ and (\ref{closedOm}), where $\Omega^1=\td (bc \sigma_1)$, $\Omega^2=\td ( ac \sigma_2)$ also define integrable complex structures.

It is also worth noting that the ODE system in Theorem \ref{th:kahler} is invariant under interchange of $a$ and $b$. This corresponds to the (orientation-preserving) discrete symmetry  $\sigma_1 \to \sigma_2$, $\sigma_2\to -\sigma_1$ of diagonal metrics with K\"ahler form (\ref{su2Om}).

For convenience we record  certain curvature formulas for this class of K\"ahler metrics. The Ricci scalar is 
\be
R= -\frac{1}{ a^2 b^2 c^2} \left( a^4+b^4- (a^2+b^2) c^2 +4a b c^2 c' +2 a^2 b^2(-1+c c'')  \right)  \; , \label{Rscalar}
\ee
where we have eliminated first derivatives of $a,b$ using (\ref{ap}), (\ref{bp}). The Ricci form is given by the potential
\be
P= \left(\frac{a^2+b^2-c^2-2 ab c'}{2ab} \right)\sigma_3  \; . \label{Rform}
\ee
It is worth noting that a convenient trick for computing the Ricci form is to invert (\ref{X2X3P}) for the potential $P$.  The Einstein condition $R_{ab}= \Lambda h_{ab}$ is equivalent to the Ricci form satisfying $\mathcal{R}_{ab}= \Lambda \Omega_{ab}$. Comparing (\ref{su2Om}) and (\ref{Rform}) immediately implies that the Einstein condition is equivalent to $2 ab c' = a^2+b^2-c^2-2 \Lambda a^2 b^2$ in agreement with~\cite{Dancer:1993aw}.

\subsection{Symmetry enhancement}

Here we prove a  general symmetry enhancement result for K\"ahler metrics with $SU(2)$ symmetry under appropriate boundary conditions. This will be useful for our later analysis.

To this end, it is useful to note that in terms of the function
\be
T := \frac{b}{a}  \; , \label{Tdef}
\ee
the ODE system in Theorem \ref{th:kahler} implies 
\bea
c T' &=& 1-T^2 \; . \label{Teq}
\eea
Observe that in terms of this the $U(1)\times SU(2)$-invariant  case $a^2=b^2$ is simply $T^2=1$.
This ODE  allows us to prove the following elementary result.

\begin{lemma} \label{lem:sym}
Consider a K\"ahler metric with $SU(2)$ symmetry as in Theorem \ref{th:kahler}. Suppose $a,b,c$ are all positive and $C^1$ for $\rho>\rho_0$, that $abc=0$ at $\rho=\rho_0$, and that $\lim_{\rho \to \rho_0^+} T $ exists.  
If $T=1$  at $\rho=\rho_1>\rho_0$ then $T=1$ for all $\rho>\rho_1$.  In particular, if $T=1$ at $\rho=\rho_0$ then $T=1$ for $\rho>\rho_0$. 
\end{lemma}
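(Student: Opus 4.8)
The plan is to work entirely with the scalar ODE \eqref{Teq}, namely $c\,T' = 1 - T^2$, treating it as a first-order ODE for $T(\rho)$ on the interval $\rho > \rho_0$ where $a,b,c$ are positive and $C^1$ (so in particular $c > 0$ there and the equation is nondegenerate, giving $T' = (1-T^2)/c$ with a continuous right-hand side). The key observation is that $T \equiv 1$ is a constant solution of this ODE, so the statement is essentially a uniqueness-of-solutions assertion for the initial value problem.

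First I would note that the right-hand side $F(\rho, T) := (1-T^2)/c(\rho)$ is locally Lipschitz in $T$ on $\rho > \rho_0$ (since $c$ is continuous and bounded away from zero on compact subintervals, and $1-T^2$ is smooth in $T$), so by the Picard–Lindel\"of theorem the solution through any point $(\rho_1, 1)$ with $\rho_1 > \rho_0$ is unique. Since $T \equiv 1$ is one such solution, any solution with $T(\rho_1) = 1$ must coincide with it for all $\rho$ in the maximal interval of existence to the right, which includes all of $(\rho_1, \infty)$ within the domain where $a,b,c > 0$. This proves the first assertion: $T = 1$ for all $\rho > \rho_1$.

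For the second assertion, the subtlety is that $\rho_0$ is a boundary point where $abc = 0$, so the ODE may be singular there and the Picard argument does not directly apply at $\rho_0$ itself. Here I would use the hypothesis that $\lim_{\rho\to\rho_0^+} T$ exists: call this limit $T_0$, and the hypothesis "$T = 1$ at $\rho = \rho_0$" means $T_0 = 1$. Now I would argue by contradiction — suppose $T$ is not identically $1$ on $(\rho_0, \infty)$. By the first part, $T(\rho) \neq 1$ for every $\rho > \rho_0$ (otherwise it would be $1$ from that point on, and then by continuity and the ODE run backwards one could try to extend; more directly, if $T \neq 1$ somewhere it must be $\neq 1$ on a whole subinterval, and I can pick $\rho_2 > \rho_0$ with $T(\rho_2) \neq 1$). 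Then on $(\rho_0, \rho_2]$ I would like to conclude $T \neq 1$ throughout: indeed if $T(\rho_1) = 1$ for some $\rho_1 \in (\rho_0, \rho_2)$ the first part forces $T(\rho_2) = 1$, a contradiction. So $T - 1$ has a definite sign on $(\rho_0, \rho_2]$, and since $1 - T^2 = (1-T)(1+T)$ with $1 + T$ positive near $T_0 = 1$, the sign of $T'$ is controlled. One then examines the two cases $T > 1$ and $T < 1$ on $(\rho_0,\rho_2]$ and integrates $T'/(1-T^2) = 1/c$; the point is that as $\rho \to \rho_0^+$ the left side, $\int^{\rho} T'/(1-T^2)$, behaves like $-\frac12\log|1-T(\rho)^2| + \text{const}$, which diverges to $\pm\infty$ because $T(\rho) \to 1$, forcing $\int^{\rho_2}_{\rho_0} d\rho'/c(\rho')$ to diverge — but this is not automatic. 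The cleaner route is: since $T(\rho) \to 1$ but $T(\rho) \ne 1$ for $\rho$ near $\rho_0$, and $T$ solves a first-order autonomous-in-$T$ type ODE with $T=1$ an equilibrium, a standard phase-line/comparison argument shows an orbit cannot reach the equilibrium value $1$ in finite "time" unless it is the equilibrium itself; translating back through the (monotone, since $c>0$) reparametrisation $s = \int d\rho/c$, the limit $T(\rho_0^+) = 1$ with $T \not\equiv 1$ is impossible. Hence $T \equiv 1$ on $(\rho_0,\infty)$.

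The main obstacle I expect is handling the behaviour at the degenerate endpoint $\rho_0$ rigorously: the ODE \eqref{Teq} is singular there (since $c \to 0$ or $a \to 0$), so one cannot invoke standard uniqueness at $\rho_0$ and must instead exploit the \emph{qualitative} structure — that $T = 1$ is an equilibrium of the autonomous-in-$T$ flow and equilibria are not reached in finite flow-time by non-equilibrium orbits — combined with the monotone change of variables $\rho \mapsto \int^\rho dc'/c$ to transfer this to the $\rho$-variable. Everything away from $\rho_0$ is a routine application of the Picard uniqueness theorem.
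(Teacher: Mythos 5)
Your proof of the first assertion is correct: on any compact subinterval of $(\rho_0,\infty)$ the function $c$ is continuous and positive, so $T'=(1-T^2)/c$ has a right-hand side locally Lipschitz in $T$, and Picard--Lindel\"of uniqueness against the constant solution $T\equiv 1$ gives the claim. This is a legitimate alternative to the paper's argument, which instead uses the sign of $1-T^2$ to get monotonicity and a direct contradiction; both work away from $\rho_0$.

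The second assertion is where your argument has a genuine gap. Your ``cleaner route'' rests on the principle that a non-equilibrium orbit of the autonomous equation $\td T/\td s=1-T^2$ cannot \emph{reach} the equilibrium $T=1$ in finite $s$-time. But after the reparametrisation $s=\int \td\rho/c$, the endpoint $\rho_0$ corresponds to a finite value of $s$ only if $\int_{\rho_0}\td\rho'/c(\rho')$ converges, and there is no reason for that here: $abc=0$ at $\rho_0$, and in precisely the cases the lemma is used for (nut, bolt I, horizon) $c$ vanishes linearly, so the integral diverges and $\rho\to\rho_0^+$ corresponds to $s\to-\infty$. In that regime the orbit does not ``reach'' the equilibrium at any finite time, it only approaches it asymptotically, and the finite-time principle you invoke says nothing --- so the case you actually need is not covered (you had in fact already noticed, in your abandoned integration attempt, that divergence of $\int\td\rho/c$ is consistent). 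The statement is still true, but for the elementary reason the paper uses: if $T\not\equiv 1$, then by the first part $T-1$ has a fixed sign on some $(\rho_0,\rho_2]$, and (\ref{Teq}) then fixes the sign of $T'$ so that $T$ is monotone with $T(\rho)\geq T(\rho_2)>1$ (resp.\ $T(\rho)\leq T(\rho_2)<1$) for all $\rho\in(\rho_0,\rho_2]$; hence $\lim_{\rho\to\rho_0^+}T\neq 1$, contradicting the hypothesis. Equivalently, in your phase-line language: $T=1$ is forward-attracting, so backward-in-$s$ convergence to it is impossible for a non-constant orbit --- it is this monotonicity fact, not finite-time non-arrival, that closes the argument.
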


\begin{proof} Under the stated assumptions $T$ is positive and $C^1$ for $\rho>\rho_0$, whereas $T$ is $C^0$ at $\rho=\rho_0$. 

For the first part, we are given that $T=1$ at $\rho=\rho_1$. 
 If $T>1$ for some $\rho=\rho_2>\rho_1$ then (\ref{Teq}) implies $T$ is monotonically decreasing so that $T(\rho_1)>T(\rho_2)>1$ which is a contradiction. On the other hand, if $T<1$ at $\rho=\rho_2>\rho_1$ then (\ref{Teq}) implies $T$ is monotonically increasing so that $T(\rho_1)< T(\rho_2)<1$ which is again a contradiction. Therefore $T=1$ for all $\rho>\rho_1$ as claimed.

For the second part, we are given $T=1$ at $\rho=\rho_0$. 
Now pick a point $\rho_*>\rho_0$ and suppose $T>1$ ($T<1$) at $\rho=\rho_*$; then (\ref{Teq}) implies $T$ is monotonically decreasing (increasing) so in particular $T(\rho_0+\epsilon)>T(\rho_*)>1$ ($T(\rho_0+\epsilon)<T(\rho_*)<1$) for small enough $\epsilon>0$, so taking the limit $\epsilon \to 0$ we deduce $T(\rho_0)>T(\rho_*)> 1$ ($T(\rho_0)< T(\rho_*)<1$) which is a contradiction. Therefore we must have $T=1$ at $\rho=\rho_*$ and since this was an arbitrary point we have shown $T=1$ for all $\rho>\rho_0$.
\end{proof}

The above result shows that if one has an enhanced $U(1)\times SU(2)$ symmetry at a point, then it has this enhanced symmetry at all points in our domain.

\subsection{Nuts and bolts}
\label{sec:NB}

In this section we will determine appropriate boundary conditions in order to obtain complete K\"ahler metrics. Relative to the frame (\ref{nonDframe}), 
\be
\det h= a^2b^2 c^2
\ee
and therefore $a^2b^2c^2>0$ ensures the $h$ is a smooth invertible metric. We  characterise the possible boundaries where $\det h=0$ in the following.

\begin{lemma} \label{lem:BC} Suppose $a,b,c$ are $C^1$ at $p$. Then $\det h=0$ at $p$ is equivalent to one of the following  conditions at $p$:
\bea
&&\text{nut}: \qquad a=b=c=0   \\
&&\text{bolt I} : \qquad    c=0,\quad  a = \pm b \neq 0     \\
&&\text{bolt II}: \qquad    a=0, \quad b=\pm c \neq 0 
\eea
The orbits of $SU(2)$ through such $p$ are either $0$-dimensional (nut) or $2$-dimensional (bolt).
\end{lemma}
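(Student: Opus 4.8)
The plan is to analyse the constraint $\det h = a^2 b^2 c^2 = 0$ at $p$ directly, using the ODE system (\ref{ap})--(\ref{bp}) together with the closure relation (\ref{closedOm}), $c = (ab)'$. Since $\det h = 0$ at $p$, at least one of $a,b,c$ vanishes there; I would organise the argument into cases according to which of these three functions vanishes, and show that in each case the $C^1$ assumption forces exactly one of the three listed patterns. The key observation is that the right-hand sides of (\ref{ap}) and (\ref{bp}) are algebraic in $a,b,c$, so evaluating them at $p$ gives algebraic relations among the boundary values. For instance, if $c=0$ at $p$, then (\ref{ap}) evaluated at $p$ forces $b^2 - a^2 = 0$, i.e. $a = \pm b$; if additionally $a = b = 0$ we are in the nut case, otherwise $a = \pm b \neq 0$, which is bolt I. If instead $a = 0$ at $p$ (with $c \neq 0$, else we are back in the previous case), then (\ref{bp}) evaluated at $p$ gives $2ac b' = -b^2 + c^2$ at $p$, and since the left side vanishes we get $b^2 = c^2$, i.e. $b = \pm c$, which with $b \neq 0$ is bolt II (and $b = 0$ here would force $c = 0$ too, contradiction, so $b \neq 0$ automatically). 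The case $b = 0$ with $a,c \neq 0$ is handled symmetrically using the $a \leftrightarrow b$ symmetry of the system noted after Theorem~\ref{th:kahler}, and maps to bolt I with $a = \pm b$ as well (or one simply repeats the computation with (\ref{ap})). I would also note that $a = b = 0$ with $c \neq 0$ is impossible: from $c = (ab)'$ one expects $c$ to vanish when $ab$ has a critical point of the right type, but more directly, if $a = b = 0$ at $p$ with $a, b$ positive and $C^1$ nearby, then near $p$ we have $ab = o(\rho - \rho_0)$... actually the cleanest route is to observe that $a=b=0, c\neq 0$ is not among the cases only if we can exclude it; I would exclude it by noting (\ref{ap}) then gives $2bca' = c^2 \neq 0$ while the left side is $0$ at $p$, a contradiction. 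So the three listed cases are exhaustive and mutually exclusive (up to the degenerate overlaps already absorbed into the nut case).

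For the converse direction — that each of the three conditions implies $\det h = 0$ — this is immediate since each forces at least one of $a, b, c$ to vanish at $p$, hence $a^2 b^2 c^2 = 0$. Thus the equivalence is established.

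For the final sentence about orbit dimensions: the orbit of $SU(2)$ through $p$ is the set $\{\rho = \rho_0\}$ with the induced metric $h_{ij}(\rho_0)\sigma_i\sigma_j$, whose rank equals the rank of the Gram matrix $h_{ij}(\rho_0)$, which in the frame (\ref{nonDframe}) (with $b_1 = c_1 = c_2 = 0$) is $\diag(a^2, b^2, c^2)$ at $\rho_0$. In the nut case all three vanish, so the orbit is a point ($0$-dimensional). In the bolt cases exactly one of $a^2, b^2, c^2$ vanishes while the other two are equal and nonzero, so the Gram matrix has rank $2$ and the orbit is $2$-dimensional; concretely, in bolt I the stabiliser contains the $U(1)$ generated by $L_3$ (since $e^3 = c\sigma_3 \to 0$), so the orbit is $SU(2)/U(1) \cong S^2$, and similarly in bolt II the relevant $U(1)$ is the one whose dual $1$-form direction degenerates. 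I would remark that an intermediate rank $1$ is not possible: that would require two of $a^2,b^2,c^2$ to vanish and the third not to, e.g. $a = b = 0, c\neq 0$, which we already excluded above; the case $c = 0$ and exactly one of $a,b$ zero is excluded because $c = 0$ forces $a = \pm b$, so they vanish together or not at all; and $a = 0$ forces $b = \pm c$, so $b = 0 \Leftrightarrow c = 0$. Hence only ranks $0$ and $2$ occur.

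The main obstacle I anticipate is not any single calculation but rather the bookkeeping: making sure the degenerate boundary cases (two functions vanishing simultaneously) are correctly sorted, that the "forbidden" combination $a=b=0, c\neq0$ is genuinely excluded by the ODEs rather than merely assumed away, and that the converse and the orbit-dimension claim are not overlooked. The $C^1$ hypothesis is used precisely to legitimise evaluating the ODEs (\ref{ap})--(\ref{bp}) at the limit point $p$, so I would be careful to invoke it explicitly at each such evaluation.
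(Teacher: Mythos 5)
Your overall strategy is exactly the paper's: use the $C^1$ hypothesis to evaluate (\ref{ap}), (\ref{bp}) at $p$, run a case analysis on which of $a,b,c$ vanish, rule out the configurations where precisely two vanish, and invoke the $a\leftrightarrow b$ interchange symmetry to reduce the remaining case. Your exclusion of $a=b=0$, $c\neq 0$ via (\ref{ap}) is precisely the paper's argument, your converse remark is fine, and your orbit-dimension discussion via the rank of the Gram matrix is sound (and in fact more explicit than the paper, which does not spell this out).

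There is, however, one concrete error in your case analysis: you claim that the subcase $b=0$ with $a,c\neq 0$ ``maps to bolt I with $a=\pm b$.'' This cannot be right, since $b=0$ and $a\neq 0$ are incompatible with $a=\pm b\neq 0$. Doing the computation you yourself suggest, (\ref{ap}) at $p$ with $b=0$ gives $0=c^2-a^2$, i.e. $a=\pm c\neq 0$ — which is the image of bolt II under the interchange $a\leftrightarrow b$, not bolt I. This is exactly how the paper disposes of the $b=0$ case (``equivalent to the latter under interchange of $a$ and $b$''), and it is also the sense in which the lemma's three-item list should be read: the list is exhaustive only up to that relabelling, which corresponds to the discrete frame symmetry $\sigma_1\to\sigma_2$, $\sigma_2\to-\sigma_1$ noted after Theorem~\ref{th:kahler}. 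With that one identification corrected, your proof is complete and coincides with the paper's.
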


\begin{proof} 
Clearly $\det h=0$ at $p$ is equivalent to $abc=0$ at $p$. Thus suppose $abc=0$ at $p$.
We need only rule out the possibility of precisely two of the three $a,b,c$ vanishing, since the nut and bolt cases above cover the remaining cases as we show below. This can happen in two inequivalent ways: either $a=b=0$,  $a=c=0$ (the case $b=c=0$ is equivalent to the latter under interchange of $a$ and $b$).  In the first case (\ref{ap}), (\ref{bp}), imply $c=0$ at $p$. In the second case (\ref{ap}), (\ref{bp})  implies $c=0$ at $p$. Thus in any case, we get a nut.  

Now consider the different types of bolt. If $c=0$ then (\ref{ap}), (\ref{bp})  implies $a^2=b^2$ or a nut; if $a=0$ then (\ref{ap}), (\ref{bp})  implies $b^2=c^2$ or a nut (the case $b=0$ is equivalent to the latter under interchange of $a$ and $b$).
\end{proof}

In order to obtain complete K\"ahler metrics we require that the nut and bolt conditions correspond to coordinate singularities and that the metric extends smoothly at these points.  First consider the nut case. Then smoothness requires at the very least that $a, b, c$ are all proportional to $\rho$. This means that the metric approaches a K\"ahler cone over an $SU(2)$-invariant space. The possible K\"ahler geometries near such a nut are given by simply Taylor expanding $a,b,c$ around $\rho=0$ and solving the ODE system (\ref{ap}), (\ref{bp}). This is summarised by the following.

\begin{lemma}
Any K\"ahler metric with cohomogeneity-1 $SU(2)$ symmetry with a nut at $\rho=0$ is given by
\be
a=\alpha \rho+ O(\rho^2), \qquad b= \alpha\rho+O(\rho^2), \qquad c= 2 \alpha^2\rho+ O(\rho^2), \label{nutexp}
\ee
where $\alpha>0$ is a constant. Furthermore, the leading term
\be
h= \td \rho^2 + \rho^2 \left(  \alpha^2 (\sigma_1^2+ \sigma_2^2) + 4 \alpha^4 \sigma_3^2 \right) \; ,
\ee
is an exact K\"ahler  metric and is the most general K\"ahler cone with $SU(2)$-symmetry.

The metric is smooth at the nut if and only if $\alpha=1/2$ and the higher order terms in $a^2, b^2, c^2$ are smooth functions of $\rho^2$, in which case the K\"ahler form extends smoothly at the nut. In this case the space around the nut is diffeomorphic to $\mathbb{R}^4$ (the cone is the flat metric on $\mathbb{R}^4$).
\end{lemma}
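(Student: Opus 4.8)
The plan is to combine a Taylor expansion of the Kähler ODE system of Theorem~\ref{th:kahler} near $\rho=0$ with the standard regularity criteria for a collapsing Bianchi~IX orbit. Throughout I take $a,b,c$ to be smooth and to vanish linearly at $\rho=0$ --- the minimal behaviour consistent with the metric closing up smoothly across the collapsed orbit, as noted above --- so I write $a=\alpha_1\rho+O(\rho^2)$, $b=\beta_1\rho+O(\rho^2)$, $c=\gamma_1\rho+O(\rho^2)$ with $\alpha_1,\beta_1>0$ (since $a,b>0$ for small $\rho>0$). Substituting into (\ref{ap}) and (\ref{bp}) and comparing the leading $O(\rho^2)$ terms gives $2\alpha_1\beta_1\gamma_1=\beta_1^2-\alpha_1^2+\gamma_1^2$ and $2\alpha_1\beta_1\gamma_1=\alpha_1^2-\beta_1^2+\gamma_1^2$; subtracting yields $\alpha_1^2=\beta_1^2$, hence $\alpha_1=\beta_1=:\alpha>0$. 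Since the relation $c=(ab)'$ is a consequence of (\ref{ap})--(\ref{bp}), matching its linear terms forces $\gamma_1=2\alpha^2$. This is (\ref{nutexp}).

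Next I would verify the claims about the leading cone. Inserting $a=\alpha\rho$, $b=\alpha\rho$, $c=2\alpha^2\rho$ directly into (\ref{ap}) and (\ref{bp}) shows both are satisfied identically, so the conical metric $h=\td\rho^2+\rho^2\left(\alpha^2(\sigma_1^2+\sigma_2^2)+4\alpha^4\sigma_3^2\right)$ is an exact Kähler metric, with Kähler form $\Omega=\td(\alpha^2\rho^2\sigma_3)$. For the converse, by Theorem~\ref{th:kahler} every $SU(2)$-invariant Kähler cone is diagonal, so $a=p\rho$, $b=q\rho$, $c=s\rho$ for constants $p,q,s$; feeding this into (\ref{ap})--(\ref{bp}) forces $p=q$ and $s=2p^2$, so the cone belongs to the one-parameter family just described, which is therefore the most general $SU(2)$-symmetric Kähler cone.

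For the smoothness statement I would argue in both directions. The leading cone extends smoothly across $\rho=0$ precisely when its link metric $\alpha^2(\sigma_1^2+\sigma_2^2)+4\alpha^4\sigma_3^2$ equals the round unit $S^3$ metric $\tfrac14(\sigma_1^2+\sigma_2^2+\sigma_3^2)$ --- a metric cone with a smooth apex is flat $\mathbb{R}^4$ --- which requires $\alpha^2=\tfrac14$ and $4\alpha^4=\tfrac14$, i.e. $\alpha=\tfrac12$, whereupon the cone is exactly the flat metric on $\mathbb{R}^4$. If in addition $a^2,b^2,c^2$ are smooth functions of $\rho^2$ --- equivalently $a,b,c$ are smooth odd functions of $\rho$, with $c=(ab)'$ automatically inheriting this --- then $a^2=\tfrac14\rho^2+\rho^4\widetilde A(\rho^2)$ and similarly for $b^2,c^2$, so $h=h_{\mathrm{flat}}+\rho^4\left(\widetilde A\,\sigma_1^2+\widetilde B\,\sigma_2^2+\widetilde C\,\sigma_3^2\right)$. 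Because $\rho^2\sigma_i=2\,\iota_E\Omega^i$ with $E$ the Euler vector field and $\Omega^i$ the constant anti-self-dual $2$-forms of flat $\mathbb{R}^4$, each $\rho^2\sigma_i$ is a smooth $1$-form vanishing at the origin, so every $\rho^4\sigma_i^2=(\rho^2\sigma_i)\otimes(\rho^2\sigma_i)$ is a smooth symmetric $2$-tensor and $h$ is smooth; the same expansion shows $\Omega=\td(ab\,\sigma_3)=\td\big((\tfrac14\rho^2+O(\rho^4))\sigma_3\big)$ is a smooth closed $2$-form. Conversely, if $h$ is smooth at the nut then by the equivariant slice theorem a neighbourhood is $SU(2)$-equivariantly a ball in $\mathbb{R}^4$ with the Hopf action, on which the invariant smooth functions are exactly the smooth functions of $\rho^2$ and the invariant smooth symmetric $2$-tensors are the $C^\infty(\rho^2)$-combinations of $h_{\mathrm{flat}}$ and the $\rho^4\sigma_i\otimes\sigma_j$; reading off the $\sigma_i^2$ coefficients then forces $a^2,b^2,c^2$ to be smooth in $\rho^2$ and the leading cone to be flat, i.e. $\alpha=\tfrac12$. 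Smoothness together with the point-orbit at $\rho=0$ makes a neighbourhood diffeomorphic to $\mathbb{R}^4$ via the exponential chart, with the flat cone as the Euclidean metric.

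The main obstacle is the smoothness equivalence of the last paragraph: upgrading the formal cancellations ($\alpha=\tfrac12$ and $a^2,b^2,c^2\in C^\infty(\rho^2)$) to genuine $C^\infty$ regularity of $h$ across the collapsed orbit, and conversely reading those conditions back off from smoothness. The clean route is the representation-theoretic description of $SU(2)$-invariant tensors on $(\mathbb{R}^4,\text{Hopf action})$ sketched above; the alternative --- introducing explicit Cartesian coordinates on $\mathbb{R}^4$ and Taylor-expanding $h$ --- is heavier computationally but sidesteps the slice theorem.
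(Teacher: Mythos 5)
Your proposal is correct and follows essentially the same route as the paper: Taylor-expand $a,b,c$ at the nut, match leading orders in (\ref{ap})--(\ref{bp}) (together with $c=(ab)'$) to get $a\sim b\sim\alpha\rho$, $c\sim 2\alpha^2\rho$, verify the exact cone solves the system, and then impose the standard cone-apex regularity criterion to force $\alpha=1/2$ with the subleading terms smooth in $\rho^2$. The extra detail you supply on the smoothness equivalence (the slice-theorem/invariant-tensor argument and the smoothness of $\rho^2\sigma_i$ on flat $\mathbb{R}^4$) is sound and merely fills in what the paper treats as standard.
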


It is worth noting that the K\"ahler cone geometry in the above lemma is precisely the K\"ahler base of the near-horizon geometry of the Gutowski-Reall black hole (if $\alpha>1/2$) and therefore it satisfies (\ref{PDE})  (in fact $\Xi=0$).
 We may now combine this lemma with Lemma \ref{lem:sym} to deduce the following symmetry enhancement result.

\begin{prop}
\label{th:kahlernutbolt} Consider a K\"ahler metric with cohomogeneity-1 $SU(2)$ symmetry smooth at a nut or bolt I  at $\rho=0$ such that $a,b,c>0$ for $\rho>0$.  Then $a=b$ for all $\rho>0$, i.e. the K\"ahler metric has $U(1)\times SU(2)$ symmetry.
\end{prop}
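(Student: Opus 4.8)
The plan is to reduce the statement to the single scalar ODE (\ref{Teq}) for $T:=b/a$ and then apply the symmetry-enhancement Lemma \ref{lem:sym}. Since $a,b>0$ for $\rho>0$, the ratio $T$ is a positive $C^1$ function on $\rho>0$, and by the observation following (\ref{Teq}) the desired conclusion $a=b$ on $\rho>0$ is equivalent to $T\equiv 1$ there. In both the nut and the bolt I cases we have $abc=0$ at $\rho=0$, so the only hypothesis of Lemma \ref{lem:sym} (applied with $\rho_0=0$) that still needs checking is that $\lim_{\rho\to 0^+}T$ exists and equals $1$; once this is in hand the second part of Lemma \ref{lem:sym} immediately gives $T\equiv 1$ on $\rho>0$, and by the discussion around (\ref{hsu2u1})--(\ref{Ju1su2}) this is exactly the $U(1)\times SU(2)$-invariant K\"ahler class.

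For the nut, I would invoke the near-nut expansion of the preceding lemma, namely $a=\alpha\rho+O(\rho^2)$, $b=\alpha\rho+O(\rho^2)$, $c=2\alpha^2\rho+O(\rho^2)$ with a common leading coefficient $\alpha>0$ (smoothness forces $\alpha=\tfrac{1}{2}$, but only the equality of the leading coefficients of $a$ and $b$ is actually needed). Then $T=(\alpha+O(\rho))/(\alpha+O(\rho))\to 1$ as $\rho\to 0^+$, which is precisely what Lemma \ref{lem:sym} requires.

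For a bolt I at $\rho=0$, Lemma \ref{lem:BC} gives $c(0)=0$ and $a(0)=\pm b(0)\neq 0$. Since $a$ and $b$ are $C^1$ and strictly positive on $\rho>0$, passing to the limit yields $a(0),b(0)\ge 0$, and together with $a(0)=\pm b(0)\neq 0$ this forces the upper sign and $a(0)=b(0)>0$; hence $\lim_{\rho\to 0^+}T=b(0)/a(0)=1$, and Lemma \ref{lem:sym} again applies.

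I do not anticipate a genuine obstacle: all the substantive work has been front-loaded into the near-boundary analysis -- the nut expansion above and the nut/bolt trichotomy of Lemma \ref{lem:BC} -- and into Lemma \ref{lem:sym}, after which reading off $T\to 1$ at the degenerate locus is routine. The only point requiring a little care is the bolt I case, where positivity of $a,b$ on $\rho>0$ is used to discard the $a(0)=-b(0)$ branch; the analogous argument fails for a bolt II, where $a(0)=0$ makes $T\to\infty$, consistent with the proposition excluding that case.
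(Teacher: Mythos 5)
Your proposal is correct and follows essentially the same route as the paper: reduce the claim to showing $T=b/a\to 1$ at $\rho=0$ (via the nut expansion in the nut case, and via the bolt~I conditions of Lemma~\ref{lem:BC} together with positivity and continuity of $a,b$ in the bolt case) and then invoke Lemma~\ref{lem:sym} to propagate $T\equiv 1$ to all $\rho>0$. The only cosmetic difference is that the paper reads off $T=1+O(\rho^2)$ from the smooth bolt expansion (\ref{boltIexp}), whereas you obtain $a(0)=b(0)>0$ directly from $a=\pm b\neq 0$ and positivity; both give the required limit, so the argument goes through as you describe.
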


\begin{proof}
Near a nut (\ref{nutexp}) immediately gives
\be
T = 1+O(\rho^2) \; ,  \label{TNBdy}
\ee
where smoothness dictates the higher order terms.
On the other hand, smoothness at a bolt I requires
\be
a= a_0+ O(\rho^2), \qquad b= a_0+ O(\rho^2), \qquad c= c_0 \rho (1+ O(\rho^2))  \; ,\label{boltIexp}
\ee
where $a_0, c_0 >0$ and the subleading terms are smooth in $\rho^2$ (so the space is diffeomorphic to $\mathbb{R}^2\times S^2$).  
Thus (\ref{boltIexp}) again gives (\ref{TNBdy}).
Therefore by Lemma \ref{lem:sym} we deduce that in both cases $T=1$ for $\rho>0$.
\end{proof}

Observe that only a nut and bolt I are compatible with an enhanced $U(1)\times SU(2)$ symmetry, therefore for bolt II one cannot obtain such  a result. It is worth noting that in the case of bolt I the K\"ahler form extends smoothly to the bolt and corresponds to the complex structure with respect to which the bolt is a complex submanifold. In contrast, a bolt II does not correspond to a complex submanifold. This provides an another distinguishing property between the two types of bolt and we will therefore sometimes refer to bolt I as a complex bolt.

In terms of Euler coordinates $(\theta, \phi, \psi)$ on $S^3$ (see Appendix \ref{app:su2}), the metric near a bolt I looks like 
\be
h \sim \td \rho^2+ c_0^2 \rho^2 (\td \psi+ \cos \theta \td \phi)^2+ a_0^2(\td \theta^2+ \sin^2\theta \td \phi^2) \; .
\ee
Therefore, the absence of the conical singularity at $\rho=0$ requires $\psi$ to be identified with period $2\pi /c_0$. Then the space near the bolt is diffeomorphic to $\mathbb{R}^2$ with coordinates $(\rho, \psi)$ fibered over an $S^2$ bolt with coordinates $(\theta, \phi)$. Regularity of this $\mathbb{R}^2$-bundle over the $S^2$  requires $\psi$ to be identified with period $4\pi/p$ where $p \in  \mathbb{N}$. Combining these regularity conditions we deduce that
\be
2 c_0 = p \in \mathbb{N}\;   \label{BIsmooth}
\ee
is required by smoothness near a bolt I.

It will be useful to consider a K\"ahler metric with $U(1)\times SU(2)$ symmetry  in the chart (\ref{EHgauge}). It is clear that it is a smooth Riemannian metric for $r>0$ provided $V(r)>0$ is smooth. The metric has potential singularities if $r=0$ or $V(r)=0$.  The former corresponds to a nut and smoothness of the metric requires $V(r)>0$ for all $r> 0$, $V(0)=1$ and $V(r)$ is a smooth function of $r^2$. The latter corresponds to a bolt (bolt I in the above) if $V(r)>0$ for $r>r_0>0$ where $V(r_0)=0$ and smoothness requires $V'(r_0)>0$ with $V(r)$ a smooth function of $r-r_0$; furthermore absence of the conical singularity in the $(r, \psi)$ part of the metric  implies $\psi$ must be identified with period $(8\pi)/(r_0 V'(r_0))$ and regularity of the resulting $\mathbb{R}^2$-bundle over the bolt implies
\be
\frac{1}{2}r_0 V'(r_0)=p \in \mathbb{N}  \label{EHbolt} \;. 
\ee
This latter condition is simply (\ref{BIsmooth}) written in the chart (\ref{EHgauge}).

It would be interesting to perform a systematic global analysis of K\"ahler metrics with $SU(2)$ symmetry. In the negative Einstein case it was found there are two families of complete metrics both containing bolts: the $U(1)\times SU(2)$ symmetric solution (\ref{EH}) (also see Appendix \ref{app:bolts}) and a triaxial diagonal metric~\cite{Dancer:1993aw} (these possess  a bolt I and bolt II respectively).   It would be interesting to perform a similar analysis for K\"ahler metrics satisfying other curvature conditions, such as extremal K\"ahler metrics or metrics satisfying the supersymmetric condition (\ref{PDE}). We will not pursue this here, although for some explicit examples see Appendix \ref{app:bolts}.

\section{Supersymmetric solutions with $SU(2)$ symmetry}
\label{sec:susysu2solns}

\subsection{General local solution}
\label{sec:local}

In this section we will construct the general timelike supersymmetric  solution with $SU(2)$ symmetry.  We will take the K\"ahler base metric  to be the general cohomogeneity-1 K\"ahler metric with $SU(2)$ symmetry as given in Theorem \ref{th:kahler}. The 1-form $\omega$ takes the $SU(2)$-invariant form (\ref{su2omega}).  

Firstly, the function $f$ is determined by the scalar curvature of the K\"ahler base (\ref{fGp})  which for our class of bases is given by (\ref{Rscalar}).
Next, we determine the 1-form $\omega$ using (\ref{dom}) which requires $G^\pm$. For this we need the Ricci form for the K\"ahler base which is given by the potential (\ref{Rform}).
Thus, $G^+$ is determined using (\ref{fGp}).  For $G^-$ we write (\ref{Gm}), where without loss of generality we may take the $\Omega^i$ to be given by (\ref{OmASD}). Thus by $SU(2)$-symmetry the $\lambda_i$ must be functions only of $\rho$ (since our $\Omega^i$ are $SU(2)$-invariant). Then  we find the integrability condition for (\ref{dom}) is equivalent to the pair of ODEs for $\lambda_i$, for $i=1,2$,
\be
\lambda_i'= \left(\frac{a^2+b^2-c^2-2 ab c'}{2ab c} \right)\lambda_i  \; ,\label{lambdaip} 
\ee
together with a complicated 5th order ODE for $c$ obtained using (\ref{ap}), (\ref{bp}) to eliminate all derivatives of $a,b$ (we do not give this as we will not need it). Note that the ODE for $c$ is in fact equivalent to $\Xi_a=0$ so the general PDE for the K\"ahler base (\ref{PDE})  is satisfied in a more restricted form. Then, upon use of these integrability conditions, we can solve (\ref{dom}) and find the general solution,
\be
\omega_1= -\frac{\ell^3}{48} c  b\lambda_1 , \qquad \omega_2 = -\frac{\ell^3}{48} a c \lambda_2   \label{om12}
\ee
and a complicated expression for $\omega_3$  that is 4th order in $c$ (again we will not need this). 

This completes the derivation of the general local solution with $SU(2)$ symmetry. Note that it depends on five functions $a,b,c, \lambda_1, \lambda_2$ subject to the first order ODEs (\ref{ap}), (\ref{bp}), (\ref{lambdaip}) and a complicated 5th order ODE for $c$.  In fact,  for solutions with an enhanced symmetry $U(1) \times SU(2)$ this system simplifies dramatically. We turn to this next.

We will now determine the general supersymmetric solution with $U(1)\times SU(2)$ symmetry.
This can be easily deduced from the general solution with $SU(2)$ symmetry given above. In order to obtain a K\"ahler base with this symmetry we must set $a^2=b^2$. To obtain $\omega$ of this symmetry we must set and $\omega_1=\omega_2=0$, which using (\ref{om12}) implies $\lambda_1=\lambda_2=0$. However, for convenience we will keep $\omega$ a general $SU(2)$-invariant form.

 In fact, we find it convenient to use the coordinate $r(\rho)$ in terms of which the K\"ahler metric is (\ref{EHgauge}).  In this coordinate system the Ricci form is given by the potential
\be
P= -\frac{1}{4}( rV'+ 4 V-4)\;  \sigma_3 \; , \label{RformV}
\ee
and the scalar curvature by
\be
R =- \frac{ 8(V-1)+7 r V' +r^2 V''}{ r^2} \; .  \label{RscalarV}
\ee 
One can then repeat the steps in the general case. This gives 
\be
f^{-1} = \frac{\ell^2( 8(V-1)+7 r V' +r^2 V'')}{24 r^2} \; .   \label{fsol}
\ee
The integrability condition for (\ref{dom}) is now a 5th order ODE for $V(r)$,\footnote{Curiously, (\ref{ODE}) happens to be a total derivative so can be integrated once.  Unfortunately this does not seem to help our later analysis.}
\bea
&&3 r^4 V V^{(5)}+ 6 r^4 V' V^{(4)}+ 30 r^3 V V^{(4)}+44 r^3 V' V^{(3)}+ r^2(47 V-32) V^{(3)} \nonumber \\ &&+ 8 r^3(V'')^2- 3 r( 13 V+32) V''+26 r^2 V' V''- 34 r V'^2+3 V'(13 V +32)=0  \label{ODE}
\eea
and  for $i=1,2$
\be
\lambda_i'=- \lambda_i \left( \frac{V'}{2V}+ \frac{2(V-1)}{rV} \right) \; .  \label{lambdaeq}
\ee
These correspond to the 5th order ODE for $c(\rho)$  and (\ref{lambdaip}) in the general solution, respectively. 
 Then, the 1-form $\omega$ is completely determined to be
\bea
\omega_3 &=& \frac{l^3}{2304 r^2} \left[-3 r^4 V''^2-64 r^2 V''+23 r^2 V'^2+V' \left(6 r^4 V^{(3)}+28 r^3 V''-320 r\right) \right. \nonumber  \\  && \left. +  2 V \left(3 r^4 V^{(4)}+30 r^3 V^{(3)}+77 r^2 V''+115 r V'-192\right)+192 V^2+192\right]  \; , \\
\omega_i &=& - \frac{\ell^3}{192} r^2 \sqrt{V} \lambda_i \; , \qquad i=1,2 \; .  \label{om12sym}
\eea
This completes the derivation of the solution in this case.

To summarise, the general local form of a supersymmetric solution with a $U(1)\times SU(2)$ symmetric base is determined by a single function $V(r)$ which obeys the 5th order ODE (\ref{ODE}). The rest of the data is uniquely fixed in terms of $V(r)$ as shown above. To obtain the general solution with $U(1)\times SU(2)$ one further requires $\omega$ to possess this symmetry which implies $\lambda_1=\lambda_2=0$.

\subsection{Examples: black holes and solitons}

In this section we record all the known supersymmetric solutions with $SU(2)$ symmetry as above. In fact they all have K\"ahler bases with $U(1)\times SU(2)$ symmetry and unless otherwise stated the full solution also has this enhanced symmetry (i.e. $\lambda_1=\lambda_2=0$). The solutions fall into two classes: black holes and solitons.

The Gutowski-Reall black hole is simply determined by~\cite{Gutowski:2004ez}
\be
V= 4\alpha^2 + \frac{r^2}{\ell^2} \; ,\label{GRBH}
\ee
where $\alpha > 1/2$ is a constant, which gives
\be
f =\frac{3 r^2}{ \ell^2 (4\alpha^2-1)+ 3 r^2}, \qquad \omega_3=\frac{(4\alpha^2-1)^2 \ell^3}{12r^2} + \tfrac{1}{2}(4\alpha^2-1)\ell+  \frac{r^2}{2\ell}   \; .
\ee
For $\alpha=1/2$ this solution reduces to global AdS$_5$ given by (\ref{VAdS}), (\ref{fomAdS}).  On the other hand, the near-horizon geometry of the Gutowski-Reall black hole (also a solution) is given simply by
\be
V= 4\alpha^2, \qquad f= \frac{3 r^2}{ \ell^2 (4\alpha^2-1)}, \qquad \omega_3=\frac{(4\alpha^2-1)^2 \ell^3}{12r^2}  \; , \label{GRNH}
\ee
where $\alpha>1/2$. Both of these represent spacetimes with smooth horizons at $r=0$ in line with the fact that $f=O(r^2)$ as $r\to 0$. 

Recently, another family of black hole solutions in this symmetry class have been constructed numerically, which are asymptotically locally AdS with conformal boundary that is spatially a homogeneously squashed $S^3$~\cite{Blazquez-Salcedo:2017kig, Blazquez-Salcedo:2017ghg, Cassani:2018mlh, Bombini:2019jhp}. We will comment further on these solutions at the end of Section \ref{sec:BH}.

We now turn to soliton solutions. In particular,  if $f>0$ globally, then the K\"ahler base must be a smooth K\"ahler surface and $\omega$ is a smooth 1-form on this surface. The corresponding supersymmetric solution is  then a strictly stationary soliton spacetime.  It is possible that there are soliton spacetimes with an ergosurface $f=0$, although we are not aware of any in this symmetry class. For simplicity we will only consider strictly timelike supersymmetric solitons. For these solutions, note that the Gram matrix $G$ of the Killing vectors $(V, R_i)$ with respect to the spacetime metric, which satisfies
\be
\det G = - f^{-1} a^2 b^2 c^2 \; ,
\ee
is invertible if and only if $abc\neq 0$. Therefore, one can say that the spacetime has a nut or bolt if and only if the K\"aher base has a nut or bolt as in Lemma \ref{lem:BC}. By applying Proposition \ref{th:kahlernutbolt} we may deduce the following general result for such solutions.

\begin{prop}
\label{prop:solitonsym}
Any strictly timelike supersymmetric soliton solution with $SU(2)$ symmetry containing a nut or a bolt I must have a K\"ahler base with $U(1)\times SU(2)$ symmetry. 
\end{prop}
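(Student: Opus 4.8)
The plan is to reduce the statement to the K\"ahler-base symmetry-enhancement result, Proposition \ref{th:kahlernutbolt}; essentially all that remains is to verify that its hypotheses hold for a strictly timelike soliton. First I would note that, since the soliton is strictly timelike, the supersymmetric Killing field $V$ is everywhere timelike, so $f^2 = -g(V,V)$ is a smooth positive function on $M$ and hence so is $f$. Consequently the orthogonal base metric $h_{\mu\nu} = f(g_{\mu\nu} + f^{-2} V_\mu V_\nu)$ is a genuinely smooth Riemannian metric on the base $B$; by Lemma \ref{lem:Gsym} it is K\"ahler and $SU(2)$-invariant, so Theorem \ref{th:kahler} puts it in the diagonal frame (\ref{nonDframe}), with profile functions $a,b,c$ obeying (\ref{ap})--(\ref{bp}) and K\"ahler form (\ref{su2Om}).

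Next I would use the Gram-matrix identity $\det G = -f^{-1} a^2 b^2 c^2$ for the Killing fields $(V, R_i)$: since $f>0$ is smooth, the orbits of $\mathbb{R}_t \times SU(2)$ degenerate precisely on the locus $abc = 0$, which is where $\det h = 0$ on $B$. By hypothesis the soliton contains a nut or a bolt I, so Lemma \ref{lem:BC} tells us that at the corresponding locus on $B$ one has either $a=b=c=0$ (nut) or $c=0$, $|a|=|b|\neq 0$ (bolt I); place this at $\rho = 0$, so that $a,b,c > 0$ on the one-sided range $\rho > 0$ that makes up the rest of $B$ (for an asymptotically AdS soliton this is $\rho \in (0,\infty)$; in general one works on the maximal interval where $a,b,c$ are positive, which suffices since the monotonicity argument in Lemma \ref{lem:sym} only needs $c>0$).

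The main obstacle, and the only place requiring real care, is to show that smoothness of the five-dimensional soliton at such a locus forces the K\"ahler base to close off there in precisely the sense assumed by Proposition \ref{th:kahlernutbolt}: near a nut one must recover (\ref{nutexp}) with $\alpha = 1/2$ and $a^2, b^2, c^2$ smooth in $\rho^2$, and near a bolt I one must recover (\ref{boltIexp}) with subleading terms smooth in $\rho^2$. This is the standard local analysis of a smooth metric near a locus where a group orbit collapses to a point (neighbourhood diffeomorphic to $\mathbb{R}^4$) or to a two-sphere (neighbourhood diffeomorphic to $\mathbb{R}^2 \times S^2$): the shrinking circle directions must round off as a flat $\mathbb{R}^2$, which fixes the leading behaviour, and smoothness of $g$, equivalently of $h$ since $f>0$ is smooth, forces the profile functions to be smooth in the appropriate even radial coordinate. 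Granting this, the hypotheses of Proposition \ref{th:kahlernutbolt} are met, so $a=b$ for all $\rho > 0$, i.e. $T := b/a \equiv 1$; by (\ref{hsu2u1}) and (\ref{Ju1su2}) this is exactly the statement that the K\"ahler base has enhanced $U(1)\times SU(2)$ symmetry, completing the proof.
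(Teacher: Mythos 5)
Your proposal is correct and follows essentially the same route as the paper: use strict timelikeness ($f>0$) and the Gram-matrix identity $\det G = -f^{-1}a^2b^2c^2$ to identify the spacetime nut/bolt locus with a nut or bolt I of the smooth K\"ahler base, and then invoke Proposition \ref{th:kahlernutbolt} (itself resting on Lemma \ref{lem:sym}) to conclude $a=b$ for $\rho>0$. The only difference is that you make explicit the smoothness-transfer step (smoothness of $g$ with $f>0$ implying the base closes off smoothly with the expansions (\ref{nutexp}), (\ref{boltIexp})), which the paper leaves implicit when it applies Proposition \ref{th:kahlernutbolt} directly.
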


The only  known nontrivial soliton  with a nut  has been constructed numerically and is asymptotically locally AdS$_5$ with squashed $S^3$ spatial  boundary geometry~\cite{Cassani:2014zwa} (this is given by the branch B solutions defined by (\ref{branchB}) and (\ref{Vseries}) discussed as the end of Section \ref{sec:BH}). It is natural to wonder whether there are soliton solutions with a bolt I, which by the above result must have enhanced $U(1)\times SU(2)$ symmetry. We may construct such solutions as follows.  

First, consider the most general  K\"ahler-Einstein metric with $U(1)\times SU(2)$ symmetry, which  is given by
\be
V= 1+ \frac{r^2}{\ell^2} +\frac{c_4}{r^4}  \; ,\label{EH}
\ee
where $c_4$ is a constant. This is a generalisation of the Eguchi-Hanson metric (obtained by setting $\ell \to \infty$), which  reduces so the Bergmann metric for $c_4=0$. It is easily checked that it is a solution to (\ref{ODE}), or simply recall that as noted earlier any Einstein base gives a supersymmetric solution.  The rest of the data $(f, \omega)$ is identical to that for AdS$_5$ (\ref{fomAdS}). While this gives a complete K\"ahler metric  with a bolt I (for $c_4<0$), unfortunately, the resulting supersymmetric solution is either singular or has CTC  since $\omega_3$ can never vanish at the bolt (see the Appendix \ref{app:bolts}).

A more general class of solutions can be constructed as follows. Inspired by the Gutowski-Reall and K\"ahler-Einstein bases, consider the ansatz
\be
V= c_0+\frac{r^2}{\ell^2} + \frac{c_2}{r^2}+\frac{c_4}{r^4} \label{GREH}
\ee
where $c_0, c_2, c_4$ are constants. We find that this is a solution to (\ref{ODE}) iff \footnote{In fact, this solution was also noticed in~\cite{Cassani:2015upa}.}
\be
c_2^2= 3 (c_0-1) c_4  \; .
\ee
The special case $c_4=0$ gives the Gutowski-Reall solution (if $c_0>1$), whereas special case $c_0=1$ gives the K\"ahler-Einstein solution. This class also contains complete K\"ahler metrics with a bolt I (see Appendix \ref{app:bolts}). Interestingly, in this case we find  that these bases do give smooth soliton solutions.  Recall, smoothness of the K\"ahler base requires the existence of a constant $r_0>0$ such that $V(r_0)=0$ and $V(r)>0$ for $r>r_0$ and (\ref{EHbolt}) for some $p \in \mathbb{N}$. Furthermore, smoothness of the spacetime requires $f>0$ for $r\geq r_0$ and $\omega_3(r_0)=0$.  We find that solutions do exist but only for $p\geq 3$, see Appendix \ref{app:bolts} for details.  In particular,  we obtain asymptotically locally AdS$_5$ solitons with $S^3/\mathbb{Z}_p$ spatial boundary and a bolt at $r=r_0$.  It turns out that these solutions were previously found as limits of non-supersymmetric solutions~\cite[section 3.4]{Cvetic:2005zi} (case B), however it is not apparent from their analysis that the K\"aher base has $U(1)\times SU(2)$ symmetry (i.e. that supersymmetry commutes with the spacetime $U(1)\times SU(2)$ symmetry)\footnote{Indeed, there is an example of a supersymmetric soliton with $U(1)\times SU(2)$ symmetry~\cite[section 3.3]{Cvetic:2005zi} (case A), for which the supersymmetric Killing field $V$ does not commute with the $SU(2)$ symmetry~\cite{Cassani:2015upa}. As a result the K\"ahler base of this solution only has $U(1)^2$ symmetry.}, and furthermore a detailed analysis of the allowed values of $p\in \mathbb{N}$ was not performed.

Of course, it would be interesting to determine all soliton solutions with a nut or bolt I.  This would require classifying all solutions to (\ref{ODE}) that are compatible with such boundary conditions, which we will not pursue here.  It would also be interesting to investigate the existence of solitons with a bolt II; such solutions may have only generic $SU(2)$ symmetry and we are not aware of any examples in this class.

Finally, it is worth noting that although the only known explicit solutions have a K\"ahler base with $U(1)\times SU(2)$ symmetry, one can easily construct supersymmetric solutions with exactly $SU(2)$ symmetry from such K\"ahler bases by taking $\omega$ to be a general $SU(2)$ invariant 1-form. This amounts to taking $\lambda_1, \lambda_2 \neq 0$ above and solving (\ref{lambdaeq}). In particular, for the Gutowski-Reall  base one finds the following deformation of the Gutowski-Reall black hole,
\be
\omega_i = - \frac{c_i\ell^3 }{192} \left(\frac{r^2}{ 4\alpha^2 + \frac{r^2}{\ell^2} }\right)^{\frac{1}{4\alpha^2}} \; , \label{GRdef}
\ee
for $i=1,2$ where $c_i$ are constants, with the rest of the data being identical.  Even though $\omega_{1,2}\to 0$ as $r \to 0$, the results of Section \ref{sec:BH} show that presence of such terms is incompatible with a regular horizon at $r=0$.  However,   for $\alpha=1/2$ these give smooth deformations of AdS$_5$ studied in~\cite{Gauntlett:2004cm}.   On the other hand, for any soliton with a bolt I at $r=r_0$ as in Proposition \ref{prop:solitonsym}, one finds that as $r \to r_0$,
\be
\omega_i = c_i (r-r_0)^{1/p} (1+O(r-r_0))  \; ,
\ee
for $i=1,2$, where we have used (\ref{EHbolt}).\footnote{This easily follows from the general formula $\omega_i'/\omega_i=2/(r V)$ which is a consequence of (\ref{lambdaeq}) and (\ref{om12sym}).} Therefore, for the asymptotically AdS$_5/\mathbb{Z}_p$ solitons discussed above which must have $p\geq 3$, we deduce that these do not give smooth deformations.

\section{Black hole solutions with $SU(2)$ symmetry}
\label{sec:BH}

We now consider the constraints imposed by having a regular event horizon. First, recall that any Killing vector field of a spacetime $(M,g)$ must be tangent to the event horizon.  This implies that any Killing field restricted to the event horizon is null (and tangent to the generators of the horizon) or spacelike. Therefore, since for a supersymmetric solution $|V|^2=-f^2 \leq 0$ this means that the supersymmetric Killing field must be null on the horizon, i.e. the event horizon is a Killing horizon of $V$. Furthermore, since $| V |^2$ is at a global maximum at the horizon we must also have that $\td | V |^2=0$ on the horizon, i.e. it is an extremal horizon.

Next, we assume that $(M,g)$ has a $G$-symmetry, where $G$ is $SU(2)$, as defined above Lemma \ref{lem:Gsym}. Therefore, the Killing fields $R_i$ that generate $SU(2)$ must be tangent to the horizon. Now, since we also assume the orbits are spacelike, we can always choose a cross-section of the horizon tangent to $R_i$, i.e., a $G$-invariant cross-section. Furthermore, since we assume $G$ has 3d orbits the cross-sections are homogeneous spaces locally isometric to $SU(2)$. Thus,  the only possible horizon topologies are $S^3$ and lens spaces.

\subsection{Near-horizon analysis}

We will now analyse constraints on the geometry arising from a regular event horizon. In particular we assume the event horizon is a smooth (or analytic) null hypersurface with a smooth cross-section. We can write the metric near the horizon in Gaussian null coordinates (see e.g.~\cite{Kunduri:2013gce}), which as argued above can be adapted to an $SU(2)$-invariant cross-section. Thus, the general $SU(2)$-invariant metric near a horizon takes the form,
\be
g= -\lambda^2 \Delta^2 \td v^2+ 2 \td v \td \lambda+ 2\lambda h_i \hat{\sigma}_i \td v + \gamma_{ij} \hat{\sigma}_i \hat{\sigma}_j  \; ,   \label{GNC}
\ee
where $\partial_\lambda$ are null geodesics transverse to the horizon synchronised so the horizon is at $\lambda=0$, the supersymmetric Killing field $V=\partial_v$, the $\hat{\sigma}_i$ are right-invariant 1-forms that satisfy (\ref{MCsu2}), and  the data $\Delta, h_i, \gamma_{ij}$ are  smooth (or analytic) functions of only $\lambda$ and $\gamma_{ij}$ is a positive-definite matrix.  Note that outside the horizon, $\lambda >0$, we assume $\Delta$ is non-zero as required for solutions in the timelike class. The near-horizon geometry is given by the scaling $(v, \lambda)\to (v/\epsilon, \epsilon \lambda)$ and the limit $\epsilon\to 0$, in which case the metric still takes the form (\ref{GNC}) where the data $\Delta, h_i, \gamma_{ij}$ are replaced by their values at $\lambda=0$ which we denote by $\mathring{\Delta}, \mathring{h}_i, \mathring{\gamma}_{ij}$. In particular, observe that  the near-horizon data $\mathring{\Delta}, \mathring{h}_i, \mathring{\gamma}_{ij}$ are all constants.

In fact  supersymmetric near-horizon geometries with compact horizon cross-sections have been completely classified in this theory~\cite{Gutowski:2004ez, Kunduri:2006uh,Grover:2013hja}.  It turns out that the only solution with $SU(2)$-symmetry is given by the original Gutowski-Reall near-horizon geometry, which can be written in the form~\cite{Gutowski:2004ez},
\bea
&&\mathring{h}_1=\mathring{h}_2=0, \qquad \mathring{h}_3= -\frac{3 \mathring{\Delta}}{\ell (\mathring{\Delta}^2- 3\ell^{-2})} \nonumber \\
&&\mathring{\gamma}_{11}=\mathring{\gamma}_{22}=\frac{1}{\mathring{\Delta}^2- 3\ell^{-2}}, \qquad  \mathring{\gamma}_{33}= \frac{\mathring{\Delta}^2}{(\mathring{\Delta}^2- 3\ell^{-2})^2},  \label{GRNH}
\eea
where $\mathring{\gamma}_{ij}$ is diagonal and $\mathring{\Delta}>\sqrt{3}/\ell$. We emphasise that the near-horizon geometry has enhanced $U(1)\times SU(2)$ symmetry. This will be important below.

We now compare certain invariants to those for a general timelike supersymmetric solution. Comparison of  the scalar $|V|^2$ implies 
\be
f=\lambda \Delta  \;  ,   \label{fNH}
\ee
where we have chosen a sign so that $\Delta\geq 0$.  
Furthermore, the metric $h$ on the orthogonal base (which is invariantly defined wherever $V$ is timelike) can be extracted  to give
\bea
h &=& \frac{1}{\Delta \lambda} ( \td \lambda+ \lambda h_i \hat{\sigma}_i)^2 + \lambda\Delta \gamma_{ij} \hat{\sigma}_i \hat{\sigma}_j \nonumber \\
&=&\left( \frac{ \Delta }{\Delta^2+ h_ih^i}\right) \frac{\td \lambda^2}{\lambda}+ \lambda\Delta q_{ij} \left(\hat{\sigma}_i +\frac{k^i \td \lambda}{\lambda} \right) \left(\hat{\sigma}_j +\frac{k^j \td \lambda}{\lambda}  \right) \; , \label{hGNC}
\eea
for $\lambda>0$,  where for convenience we introduce $h^i:=\gamma^{ij}h_j$, 
\be
q_{ij}:= \gamma_{ij}+ \frac{h_i h_j}{\Delta^2}, \qquad  k^i := \frac{h^i}{\Delta^2+ h^i h_i}.
\ee
Note these obey $q_{ij} k^j= h_i/\Delta^2$.  Finally, we can also extract the 1-form $\omega$, which gives
\be
\omega= -\frac{h_i \hat{\sigma}_i}{\lambda \Delta^2} \; , \label{omGNC}
\ee
where without loss of generality we have fixed the $\lambda$ component to zero since the $SU(2)$ symmetry implies it is pure gauge (here we using the gauge freedom in the definition of $\omega$).  We may now establish the following converse to Lemma \ref{lem:Gsym} for solutions with horizons which will be useful later.

\begin{lemma} \label{lem:Hsym}
Consider a  supersymmetric  solution with $SU(2)$ symmetry that is timelike outside a smooth horizon. If the K\"ahler base has $U(1)\times SU(2)$ symmetry then the spacetime metric also has $U(1)\times SU(2)$ symmetry, i.e., one can take $\omega$ to be $U(1)\times SU(2)$ invariant. 
\end{lemma}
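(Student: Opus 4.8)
The plan is to show that when the K\"ahler base is $U(1)\times SU(2)$-invariant, the only obstruction to the full spacetime metric being $U(1)\times SU(2)$-invariant lives in $\omega$, and that this obstruction can be removed by the residual gauge freedom. Concretely, with a $U(1)\times SU(2)$-invariant base, the data $f$ is automatically $U(1)\times SU(2)$-invariant (it depends only on $\rho$, equivalently $r$, via the scalar curvature of the base), and the K\"ahler form is the $U(1)\times SU(2)$-invariant form $J=\td(\tfrac14 r^2\sigma_3)$. By Lemma~\ref{lem:Gsym} we already know $\omega$ can be taken $SU(2)$-invariant, so $\omega=\omega_i(\rho)\sigma_i$. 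The content of the lemma is that we may further choose $\omega=\omega_3(r)\sigma_3$, i.e. kill $\omega_1,\omega_2$, using the additional $L_3$-action. So the first step is to recall from Section~\ref{sec:local} that on a $U(1)\times SU(2)$-invariant base the ASD forms $\Omega^1,\Omega^2$ entering $G^-=\tfrac{\ell}{2R}\lambda_i\Omega^i$ are \emph{not} $L_3$-invariant --- indeed $\Omega^1+i\Omega^2$ has $L_3$-charge $\pm 2$ --- while the functions $\lambda_{1,2}(\rho)$ satisfy the decoupled linear ODE \eqref{lambdaeq}. Hence $\omega_1,\omega_2$ as given by \eqref{om12sym} are precisely the non-$L_3$-invariant part of $\omega$ and vanish if and only if $\lambda_1=\lambda_2=0$.

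The second step is the crucial one: I would show that a smooth horizon at $\lambda=0$ forces $\lambda_1=\lambda_2=0$. The near-horizon geometry of any $SU(2)$-symmetric solution is the Gutowski-Reall near-horizon geometry \eqref{GRNH}, which has $\mathring h_1=\mathring h_2=0$; from \eqref{omGNC}, $\omega=-h_i\hat\sigma_i/(\lambda\Delta^2)$, so in the limit the $\sigma_1,\sigma_2$ components of $\omega$ vanish. More carefully, away from the limit, smoothness of $h_i(\lambda)$ and the requirement that the horizon be a regular extremal Killing horizon constrain $h_1(\lambda),h_2(\lambda)$ to vanish sufficiently fast as $\lambda\to 0$. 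On the other hand, translating between the Gaussian null coordinate $\lambda$ and the K\"ahler coordinate $\rho$ (via $f=\lambda\Delta$ from \eqref{fNH} and $f^{-1}\propto R$), and comparing the $\sigma_{1},\sigma_2$ components of $\omega$ in the two descriptions \eqref{omGNC} versus \eqref{om12}, one gets $\omega_1,\omega_2$ expressed in terms of $h_1,h_2$. Combining with the ODE \eqref{lambdaeq} (equivalently \eqref{lambdaip}) --- whose solutions behave as powers of $r$ near $r=0$ that are incompatible with the smoothness/decay dictated by the Gaussian null expansion unless the integration constants $c_1=c_2=0$ --- forces $\lambda_1=\lambda_2\equiv 0$. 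Then by the first step $\omega_1=\omega_2=0$, so $\omega=\omega_3\sigma_3$ is $L_3$-invariant, and the spacetime metric \eqref{metricform} is $U(1)\times SU(2)$-invariant.

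The main obstacle I anticipate is the second step: making rigorous the claim that regularity of the horizon (as opposed to merely the near-horizon \emph{limit}) kills the $\lambda_{1,2}$ integration constants. The subtlety is that $\omega_{1,2}\to 0$ as $r\to 0$ in general --- for the Gutowski-Reall base one has $\omega_i\sim r^{1/(2\alpha^2)}\to 0$ from \eqref{GRdef} --- so vanishing of $\omega_{1,2}$ at the horizon alone is not enough; one needs the full Gaussian null expansion to show these terms are genuinely incompatible with a smooth (or analytic) metric at $\lambda=0$, not just with the near-horizon data. This is exactly the kind of ``delicate near-horizon analysis'' advertised in the introduction, and I would expect the argument to hinge on matching subleading orders in $\lambda$ (or $r$) of $g$ in the two coordinate systems, using analyticity to rule out the fractional or logarithmic behaviour that a nonzero $\lambda_{1,2}$ would produce. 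A cleaner alternative, if available, is to argue directly that $h_1,h_2\equiv 0$ as \emph{functions} of $\lambda$ (not just at $\lambda=0$) from the field equations plus smoothness, which would immediately give $\omega_1=\omega_2=0$ via \eqref{omGNC} without ever invoking the $\lambda_i$ ODE; I would try this route first, as it sidesteps the coordinate-change bookkeeping entirely.
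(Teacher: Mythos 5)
Your strategy --- reduce the non-invariant part of $\omega$ to the $\lambda_{1,2}$ modes via (\ref{om12sym}) and then argue that horizon regularity kills them --- is not how the paper proceeds, and as written it has a genuine gap exactly where you anticipate one: your second step is a plan, not a proof. To carry it out you would have to control the change of frame between the Gaussian-null forms $\hat\sigma_i$ and the K\"ahler-coordinate forms $\sigma_i$, i.e.\ the rotation $A_{ij}$ solving (\ref{AB}); for the Gutowski--Reall near-horizon data $\mathring{k}^3\neq 0$, so $A\sim e^{K_0\log\lambda}$ is not even continuous at $\lambda=0$, and ``comparing the $\sigma_1,\sigma_2$ components of $\omega$ in the two descriptions'' means tracking an oscillatory factor in $\log\lambda$ on top of the power $r^{1/(2\alpha^2)}$ coming from (\ref{lambdaeq}). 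One must then show this behaviour of $h_{1,2}(\lambda)=-\lambda\Delta^2\times(\text{GNC components of }\omega)$ is incompatible with smoothness for every admissible near-horizon value $V_0>1$, not just exhibit it for the exact Gutowski--Reall base as in (\ref{GRdef}). None of this is done, so the key implication (smooth horizon $\Rightarrow\lambda_1=\lambda_2=0$) is asserted rather than proved; note also that the simplification of $A$ to a block-diagonal $SO(2)$ used later in the paper is derived \emph{assuming} spacetime $U(1)\times SU(2)$ invariance, so you cannot borrow it here without circularity.

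More importantly, you have missed that the lemma is essentially algebraic and needs none of this machinery. The paper's proof works entirely in Gaussian null coordinates: the hypothesis $\mathcal{L}_{\hat{L}_3}h=0$ applied to the extracted base metric (\ref{hGNC}) forces $h_1=h_2=0$ and $q_{ij}$ diagonal with $q_{11}=q_{22}$ \emph{as functions of} $\lambda$ (not merely at $\lambda=0$), whence $\gamma_{ij}$ is diagonal with $\gamma_{11}=\gamma_{22}$; the spacetime metric (\ref{GNC}) is then manifestly $U(1)\times SU(2)$-invariant and $\omega$ in the gauge (\ref{omGNC}) reduces to $-h_3\hat\sigma_3/(\lambda\Delta^2)$. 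No field equations, no near-horizon classification, no decay or smoothness matching enters --- only the fact that the same data $(\Delta,h_i,\gamma_{ij})$ build both the base and the full metric. Your closing ``cleaner alternative'' points in this direction, but you present it as requiring ``the field equations plus smoothness'' and leave it untried; in fact the invariance of the base alone, read off in the Gaussian null chart, already does the whole job.
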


\begin{proof}
Suppose the K\"ahler base metric $h$ has $U(1)\times SU(2)$ symmetry, i.e. assume (\ref{hGNC}) satisfies $\mathcal{L}_{\hat{L}_3} h=0$ where $\hat{L}_i$ are the right-invariant vectors dual to $\hat{\sigma}_i$. Then it easily follows that $h_1=h_2=0$ and that $q_{ij}$ is diagonal with  $q_{11}=q_{22}$.  By the definition of $q_{ij}$ this implies $\gamma_{ij}$ is diagonal with  $\gamma_{11}=\gamma_{22}$. Thus the spacetime metric (\ref{GNC}) has $U(1)\times SU(2)$ symmetry as claimed. In particular, the 1-form $\omega$ in the gauge (\ref{omGNC})  is $U(1)\times SU(2)$ invariant.
\end{proof}

We now wish to bring the metric (\ref{hGNC}) into the standard form (\ref{su2metric}). To this end, let us define the 1-forms
\be
{\sigma}_i := A_{ij}(\lambda)( \hat{\sigma}_j + B_j(\lambda) \td \lambda) \; ,  \label{sigmatrans}
\ee
where $A_{ij}$ are the components of an invertible matrix.  Then we find
\be
\td {\sigma}_i = - \tfrac{1}{2} \epsilon_{pjk} A_{ip} (A^{-1})_{jl} (A^{-1})_{km} {\sigma}_l \wedge {\sigma}_m  + \td \lambda\wedge {\sigma}_m \left( A_{ij}' (A^{-1})_{jm} + A_{ip} (A^{-1})_{km} B_j \epsilon_{pjk} \right)  \; ,
\ee
and  requiring that the terms proportional to $\td \lambda$ vanish is equivalent to
\be
A'_{ij}= A_{ik} C_{kj}, \qquad \text{where} \qquad C_{ij} := \epsilon_{ijk} B_k  \; . \label{AB}
\ee
Now $C_{ij}$ is an antisymmetric matrix, so this equation can always be solved locally for some orthogonal matrix $A_{ij}$.  In this case
\be
 \epsilon_{pjk} A_{ip} (A^{-1})_{jl} (A^{-1})_{km} =  \epsilon_{pjk} (A^{-1})_{pi} (A^{-1})_{jl} (A^{-1})_{km} = \det (A^{-1}) \epsilon_{ilm} = \pm \epsilon_{ilm}  \; .
 \ee
 In particular, if we take $A\in SO(3)$ we deduce that the 1-forms $\sigma_i$ defined by (\ref{sigmatrans}) satisfy the Maurer-Cartan equations for right-invariant 1-forms (\ref{MCsu2}) for $SU(2)$.
Given this, we can  introduce a new set of Euler coordinates on $SU(2)$ such that  ${\sigma}_i$ are right-invariant 1-forms.
 
 We may use this result  to transform (\ref{hGNC}) into the standard form (\ref{su2metric}) as follows.  Perform a coordinate transformation such that $(\lambda, \hat{\sigma}_i) \to (\rho, {\sigma}_i)$, where $\rho= \rho(\lambda)$ and (\ref{sigmatrans}) are defined by 
 \bea
 \left(\frac{\td \rho}{\td \lambda} \right)^2 &=&   \left( \frac{ \Delta }{\Delta^2+ h_ih^i}\right) \frac{1}{\lambda}  \; ,\label{rholambda} \\
 B_i &=& \frac{k^i}{ \lambda}\; ,  \label{B}
 \eea
 where $A_{ij}\in SO(3)$ obeys (\ref{AB}). Then  we obtain (\ref{su2metric}) where 
 \be
 {h}_{ij}= \lambda \Delta A_{ik} q_{kl} A^T_{lj} \; .  \label{hath}
 \ee
 The 1-form $\omega$ in the new coordinates is (gauge equivalent) to (\ref{su2omega}) where 
 \be
 \omega_i=  -\frac{A_{ij} h_j}{\lambda \Delta^2} \; .  \label{omhat}
 \ee
We will now use this to derive the boundary conditions near a horizon.

\begin{prop} \label{prop:NH}
Consider a  timelike supersymmetric and $SU(2)$-symmetric solution to $D=5$ minimal gauged supergravity containing a smooth  horizon with $\mathring{\Delta}>0$ and an $U(1)\times SU(2)$-invariant near-horizon geometry. The horizon corresponds to a conical singularity in the K\"ahler base metric $h$, i.e., in the metric (\ref{su2metric}) the horizon can be taken to be  at $\rho=0$ such that
\be
{h}_{ij}=  \rho^2 \mathring{h}_{ij}+O(\rho^4)   
\label{conical}
\ee
as $\rho \to 0$, where $\mathring{h}_{ij}$ is a positive diagonal matrix with $\mathring{h}_{11}= \mathring{h}_{22}$ fixed by the near-horizon geometry.
Furthermore, the 1-form $\omega$ is (\ref{su2omega}) where
\be
\omega_i = \frac{ \mathring{\omega}_i }{\rho^2} + O(1)  \label{omNH}
\ee
and  $\mathring{\omega}_i=\mathring{\omega}_3 \delta_{i3}$ is a constant determined by the near-horizon geometry. Finally, 
\be
f = \rho^2 \mathring{f} + O(\rho^4),
\ee
where $\mathring{f}$ is a positive constant determined by the near-horizon geometry and the subleading terms are smooth functions of $\rho^2$. 
\end{prop}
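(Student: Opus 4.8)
The plan is to push the near-horizon Gaussian null description (\ref{GNC}) through the change of variables $(\lambda,\hat\sigma_i)\to(\rho,\sigma_i)$ of (\ref{sigmatrans})--(\ref{hath}) that puts $h$ into the canonical form (\ref{su2metric}), keeping careful track of the order in $\rho$. By hypothesis $\Delta, h_i, \gamma_{ij}$ are smooth in $\lambda$ near $\lambda=0$, with values $\mathring\Delta>0$, $\mathring h_i=\mathring h_3\delta_{i3}$ and $\mathring\gamma_{ij}$ diagonal with $\mathring\gamma_{11}=\mathring\gamma_{22}$ -- this last feature being precisely the statement that the near-horizon geometry has $U(1)\times SU(2)$ symmetry, and it is what makes the argument work. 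First I would treat the radial coordinate: from (\ref{rholambda}), $(\td\rho/\td\lambda)^2=\lambda^{-1}K(\lambda)$ with $K:=\Delta/(\Delta^2+h_ih^i)$ smooth and positive near $\lambda=0$, so integrating with $\rho(0)=0$ gives $\rho=\lambda^{1/2}\Phi(\lambda)$ with $\Phi$ smooth and $\Phi(0)=2\sqrt{K(0)}>0$; hence $\rho^2=\lambda\Phi(\lambda)^2$ is a smooth function of $\lambda$ with nonzero derivative at $\lambda=0$, and conversely $\lambda$ is a smooth function of $\rho^2$ vanishing at $\rho=0$, with $\lambda=\rho^2/(4K(0))+O(\rho^4)$. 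In particular $\lambda=0$ corresponds to $\rho=0$, and $f=\lambda\Delta$ (by (\ref{fNH})) is a smooth function of $\rho^2$ vanishing at the horizon, so $f=\mathring f\rho^2+O(\rho^4)$ with $\mathring f=\tfrac{1}{4}(\mathring\Delta^2+\mathring h_i\mathring h^i)>0$ determined by the near-horizon data; this establishes the last assertion.

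The substantive step is to control the $SO(3)$-valued frame rotation $A_{ij}(\lambda)$ of (\ref{sigmatrans}), which solves (\ref{AB}), (\ref{B}), i.e. $A'=AC$ with $C_{ij}=\epsilon_{ijk}B_k$ and $B_k=k^k/\lambda$. Since $\mathring\gamma$ is diagonal with $\mathring h_1=\mathring h_2=0\neq\mathring h_3$, one finds $\mathring k^1=\mathring k^2=0$ but $\mathring k^3\neq 0$, hence $B_1,B_2=O(1)$ while $B_3=\mathring k^3\lambda^{-1}+O(1)$, so $C(\lambda)=\lambda^{-1}\mathring k^3\mathcal J+C_{\rm reg}(\lambda)$ where $\mathcal J_{ij}:=\epsilon_{ij3}$ generates rotations in the $1$-$2$ plane and $C_{\rm reg}$ is smooth. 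Substituting $A=\hat A\,M$ with $M(\lambda):=\exp(\mathring k^3\ln\lambda\,\mathcal J)$, which commutes with $\mathcal J$, converts the equation into $\hat A'=\hat A\,M C_{\rm reg}M^{-1}$, whose right-hand side is bounded near $\lambda=0$ (conjugation by the orthogonal $M$ preserves norms) and hence integrable; therefore $\hat A(\lambda)$ has a limit $\hat A_0\in SO(3)$ and $\hat A=\hat A_0+O(\lambda)$. The solution of the linear ODE is determined only up to constant left-multiplication by an element of $SO(3)$, which is exactly the residual global $SO(3)$ freedom $\sigma_i\to R_{ij}\sigma_j$ in the canonical coframe; using it we may take $\hat A_0=I$, so that $A(\lambda)=M(\lambda)+O(\rho^2)$, where $M(\lambda)$ is a rotation in the $1$-$2$ plane -- it fixes the third coordinate vector $e_3$ and commutes with any diagonal matrix having equal $11$ and $22$ entries, in particular with $\mathring\gamma$ and with $\mathring q_{ij}:=\mathring\gamma_{ij}+\mathring h_i\mathring h_j/\mathring\Delta^2$.

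The conclusion is then a matter of substitution. By (\ref{hath}), $h_{ij}=f\,(AqA^T)_{ij}$ where $q_{ij}=\gamma_{ij}+h_ih_j/\Delta^2$ is a smooth function of $\rho^2$ equal to $\mathring q_{ij}$ at the horizon; with $A=M+O(\rho^2)$ and $M\mathring q M^T=\mathring q$ one gets $AqA^T=\mathring q+O(\rho^2)$, the error being bounded in norm (the $\ln\rho$-oscillation carried by $M$ is innocuous because $M$ is orthogonal), hence
\be
h_{ij}=\big(\mathring f\rho^2+O(\rho^4)\big)\big(\mathring q_{ij}+O(\rho^2)\big)=\rho^2\,\mathring h_{ij}+O(\rho^4),\qquad \mathring h_{ij}:=\mathring f\,\mathring q_{ij},
\ee
which is a positive diagonal matrix with $\mathring h_{11}=\mathring h_{22}$ fixed by the near-horizon geometry; this is (\ref{conical}). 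Likewise (\ref{omhat}) gives $\omega_i=-(Ah)_i/(\lambda\Delta^2)$, and since $\lambda\Delta^2=f\Delta=\mathring f\mathring\Delta\rho^2+O(\rho^4)$ and $(Ah)_i=(M\mathring h)_i+O(\rho^2)=\mathring h_i+O(\rho^2)$ (because $M$ fixes $e_3$ while $\mathring h=\mathring h_3 e_3$), we obtain $\omega_i=\mathring\omega_i/\rho^2+O(1)$ with $\mathring\omega_i=-\mathring h_i/(\mathring f\mathring\Delta)=\mathring\omega_3\delta_{i3}$, which is (\ref{omNH}).

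The one genuinely delicate point is the analysis of $A$ in the second paragraph. Because $C$ has a true $\lambda^{-1}$ pole, $A$ does not converge as $\lambda\to 0$ -- it winds around infinitely often -- so a priori there is no well-defined limit for $h_{ij}/\rho^2$ or for $\rho^2\omega_i$. What saves the argument is that the singular part of $C$ is proportional to the generator $\mathcal J$ of the $U(1)$ isometry of the near-horizon geometry: after factoring out the explicit winding rotation $M(\lambda)$ the residual equation is regular, and the leftover winding acts trivially on all leading near-horizon data $\mathring\gamma,\mathring h,\mathring q$, surviving only in honestly $O(\rho^2)$ pieces of $AqA^T$ and $Ah$ which the overall factor $f=O(\rho^2)$ then demotes to the claimed $O(\rho^4)$ and $O(1)$ remainders. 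This is exactly where the hypothesis that the near-horizon geometry is $U(1)\times SU(2)$-invariant (rather than merely $SU(2)$-invariant) is indispensable: for a generic $SU(2)$ near-horizon geometry $\mathring\gamma$ would be nondegenerate and the winding of $A$ would be a true obstruction to the conical form (\ref{conical}).
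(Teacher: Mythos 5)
Your proposal is correct and follows essentially the same route as the paper's proof: the same smooth change of radial coordinate $\rho^2\propto\lambda$, the same factorization of the frame rotation into the explicit winding piece (your $M(\lambda)=\exp(\mathring{k}^3\ln\lambda\,\mathcal{J})$ is exactly the paper's $e^{K_0\log\lambda}$) times a factor with a limit, and the same use of the $U(1)\times SU(2)$ invariance of the near-horizon data so that the winding acts trivially on $\mathring{q}_{ij}$ and $\mathring{h}_i$, yielding identical expressions for $\mathring{h}_{ij}$, $\mathring{\omega}_i$ and $\mathring{f}$.
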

 
 \begin{proof}
 As mentioned above a  smooth (or analytic) horizon requires that the data $\Delta, h_i, \gamma_{ij}$  be smooth (or analytic) functions of $\lambda$ at $\lambda=0$. Then, setting the horizon to be at $\rho=0$,  the ODE (\ref{rholambda})  implies that $\rho^2$ is a smooth (or analytic) function of $\lambda$ at $\lambda=0$ since by assumption $\Delta> 0$ at $\lambda=0$.   Indeed, explicitly solving (\ref{rholambda}) near $\lambda=0$ gives the leading order behaviour
 \be
 \rho^2= \left( \frac{4\mathring{\Delta}}{\mathring{\Delta}^2+ \mathring{h}_i \mathring{h}^i} \right) \lambda  \left( 1+ O(\lambda) \right) \; . \label{lambdarho}
 \ee
Inverting, we also deduce that $\lambda$ is a smooth (or analytic) function of $\rho^2$ at $\rho=0$.
 Next we need to determine $A_{ij}$ from (\ref{AB}) where $B_i$ is given by (\ref{B}). Clearly, if $\mathring{k}^i=0$ then $B_i$ is smooth at the horizon so there is a  solution $A_{ij}$ that is smooth at the horizon.   However, if $\mathring{k}^i \neq 0$ (as is the case for the Gutowski-Reall near-horizon geometry) then $B_i$ is singular at the horizon, which in turn implies that $A_{ij}$ is not $C^1$ at $\lambda=0$ (indeed, it is not even $C^0$).   Nevertheless, since $A_{ij}$ is an orthogonal matrix, its entries are bounded so $A_{ij}= O(1)$ as $\lambda \to 0$, which will be important in bounding the subleading terms in what follows.
 
 In order to isolate the singular behaviour in $A$  define the parameter $\tau:= -\log \lambda$ so (\ref{AB}) becomes $\td A/ \td \tau= -A K$ where $K_{ij}=\epsilon_{ijk} k^k$ is a smooth function of $\lambda= e^{-\tau}$. Therefore a unique solution exists on $[\tau_0, \infty)$ with a given initial condition at $A(\tau_0)$ for some $\tau_0$ (determined by the range of the coordinate $\lambda$). In order to examine the behaviour of $A$ near the horizon we need to determine its asymptotics as $\tau\to \infty$.  For large $\tau$  we have $K=K_0+ O(e^{-\tau})$ where $K_{0ij}= \epsilon_{ijk} \mathring{k}^k$. It follows that
 \be
 \frac{\td }{\td \tau} ( A e^{\tau K_0} ) = -A (K-K_0) e^{\tau K_0} = O(e^{-\tau}) \; ,
 \ee
 where the final estimate follows from the fact that $A, e^{\tau K_0}$ are orthogonal matrices (so have bounded components) and the aforementioned asymptotics of $K$. Integrating, it follows that\footnote{Integrating over $[\tau, \tau_*]$, for $\tau>\tau_0$, gives $A(\tau_*) e^{\tau_* K_0}- A(\tau) e^{\tau K_0}= -\int_{\tau}^{\tau_*} A (K-K_0) e^{\tau K_0} \td \tau$  and since the integrand is $O(e^{-\tau})$ the limit $\tau_*\to \infty$ exists, so $A_0-A(\tau) e^{\tau K_0}=  -\int_{\tau}^{\infty} A (K-K_0) e^{\tau K_0} \td \tau$ where $A_0 := \lim_{\tau_*\to \infty} A(\tau_*) e^{\tau_* K_0}$. Then, use again that the integrand is $O(e^{-\tau})$ to deduce $A_0-A(\tau) e^{\tau K_0}= O(e^{-\tau})$.}
 \be
 A= A_0 e^{-\tau K_0} + O(e^{-\tau})  = A_0 e^{K_0 \log \lambda} + O(\lambda)  \; ,
 \ee
 where $A_0$ is a constant orthogonal matrix. Clearly we are free to left multiply $A$ by a constant orthogonal matrix and we will use this to set $A_0=I$.
  
 We now wish to expand the metric data near the horizon.  For this we will need to use the assumptions that the near-horizon geometry has $\mathring{\Delta}>0$ and enhanced $U(1)\times SU(2)$ symmetry.  Crucially, this means $\mathring{\gamma}_{ij}$ is diagonal with $\mathring{\gamma}_{11}=\mathring{\gamma}_{22}$ and $\mathring{h}_1=\mathring{h}_2=0$ and hence $\mathring{q}_{ij}$ is diagonal with $\mathring{q}_{11}=\mathring{q}_{22}$. 
 Now, expanding (\ref{hath}) we find
 \be
 {h}_{ij} = \lambda \mathring{\Delta} \mathring{q}_{ij}  + O(\lambda^2)  \; ,
 \ee
 where we have used the fact that $e^{-\tau K_0} \mathring{q} e^{\tau K_0}=\mathring{q}$ and it should be noted that the subleading $O(\lambda^2)$ terms are not necessarily smooth at $\lambda=0$ in general (due to the factors of $e^{\tau K_0}$).  Thus, using (\ref{lambdarho}) we obtain (\ref{conical}) with
 \be
 \mathring{h}_{ij}=\frac{1}{4} (\mathring{\Delta}^2+ \mathring{h}_i \mathring{h}^i) \mathring{q}_{ij}  \; .
 \ee
 Similarly, expanding (\ref{omhat}) we obtain
 \be
 {\omega}_i = -  \frac{\mathring{h}_i}{\lambda \mathring{\Delta}^2} + O(1) \; ,
 \ee
 where we have used that $(e^{-\tau K_0})_{ij} \mathring{h}_j=\mathring{h}_i$, and again the $O(1)$ terms are not generally smooth at the horizon, 
so we find  (\ref{omNH}) where
 \be
 \mathring{\omega}_i =  - \frac{4 \mathring{h}_i} {\mathring{\Delta}(\mathring{\Delta}^2+ \mathring{h}_i \mathring{h}^i)}  \; .
 \ee
Finally, (\ref{fNH}) implies
\be
f=\frac{1}{4} (\mathring{\Delta}^2+ \mathring{h}_i \mathring{h}^i) \rho^2  + O(\rho^4) \; ,
\ee
where now the higher order terms are smooth functions of $\rho^2$. This completes the proof.
 \end{proof}

Now, as mentioned above the explicit form of near-horizon geometry must be given by the Gutowski-Reall near-horizon geometry (\ref{GRNH}), which in particular satisfies the conditions in the above proposition. The constants appearing in the above can be computed explicitly from  (\ref{GRNH}) which gives
\bea
&&\mathring{h}_{11}= \mathring{h}_{22}= \alpha^2, \qquad \mathring{h}_{33}= 4 \alpha^4  \; ,\\
&& \mathring{f}= \frac{12 \alpha^2}{\ell^2 (4\alpha^2-1)}, \qquad \mathring{\omega}_3= \frac{\ell^3 (4\alpha^2-1)^2}{48 \alpha^2} \; ,
\eea
where we have introduced the constant
\be
\alpha^2:= \frac{1}{4}\frac{\mathring{\Delta}^2+ \frac{9}{\ell^2}}{\mathring{\Delta}^2 - \frac{3}{\ell^2}}  \; ,
\ee
which satisfies $\alpha>1/2$.

We will now combine our near-horizon analysis with the general constraints for a timelike supersymmetric solution with $SU(2)$ symmetry. In particular, this means the metric $h$ on the orthogonal base must be K\"ahler with cohomogeneity-1 $SU(2)$ symmetry.  The general form for such a K\"ahler metric is given in Theorem \ref{th:kahler}. However,  it is important to note that this assumes that one can use the rotation freedom $\sigma_i \to R_{ij} \sigma_j$ where $R \in SO(3)$ is constant, to diagonalise $h_{ij}(\rho)$ for some value of $\rho$ where $h_{ij}$ is invertible. Now, inspecting our proof of Proposition \ref{prop:NH}, reveals that only an $SO(2)$ subgroup which fixes $\sigma_3$ of this rotation freedom remains (that which preserves $\mathring{q}_{ij}$).   Fortunately, this is just enough for our purposes. Indeed, the proof of Theorem \ref{th:kahler} shows that without using this $SO(3)$ rotational freedom, the general K\"ahler metric is described in our frame (\ref{nonDframe}) by four functions $a, b, c, b_1$, so the only non-diagonal term in the metric is $2 a b_1 \sigma_1 \sigma_2$. Therefore,  we can always diagonalise this metric at a point using the $SO(2)$ rotations that fix $\sigma_3$.  We deduce that may indeed use the general form for the K\"ahler metric as given in Theorem \ref{th:kahler}, which in particular is diagonal for $\rho>0$.

 Therefore, comparing the near-horizon expansion given in Proposition \ref{prop:NH} to the K\"ahler metric in Theorem \ref{th:kahler}, implies that near a horizon $\rho=0$ the functions in the K\"ahler metric are given by
\be
a= \alpha \rho+ O(\rho^3),  \qquad b= \alpha \rho+ O(\rho^3), \qquad c= 2\alpha^2 \rho+ O(\rho^3) \; .  \label{abcNH}
\ee
These near-horizon expansions will be important in our subsequent analysis.
We emphasise that due to the singular behaviour of the coordinate change (\ref{sigmatrans}) defined by $A_{ij}$, the subleading terms are not necessarily smooth. The singular behaviour of $A_{ij}$ at the horizon prevents us from improving this result in the general case. However, if one has an $U(1) \times SU(2)$ symmetry one can obtain a stronger result which allows us to control the regularity of the subleading terms.

\begin{prop} \label{prop:NHu1su2}
Consider a timelike supersymmetric solution to $D=5$ minimal gauged supergravity with $U(1)\times SU(2)$ symmetry containing a smooth (analytic)    horizon with the Gutowski-Reall near-horizon geometry (\ref{GRNH}). The horizon corresponds  to a conical singularity in the K\"ahler base metric $h$, i.e., in the metric (\ref{hsu2u1}) the horizon can be taken to be  at $\rho=0$ such that
\be
a^2= \alpha^2 \rho^2 +O(\rho^4), \qquad c^2= 4\alpha^4 \rho^2 + O(\rho^4)  \; ,
\ee
as $\rho \to 0$ and the metric components $a^2, c^2$  are smooth (analytic) functions of $\rho^2$ at $\rho=0$.  
Furthermore, the 1-form $\omega$ takes the form (\ref{omsu2u1}) where
\be
 \omega_3= \frac{\mathring{\omega}_3}{\rho^2} + O(1)
\ee
and $\rho^2 \omega_3$ is a smooth (analytic) function of $\rho^2$ at $\rho= 0$. The function $f$ is as in Proposition \ref{prop:NH} and is a smooth (analytic) function of $\rho^2$.

 Conversely, any timelike supersymmetric solution $(f, \omega, h)$ of this form, has a smooth (analytic) horizon at $\rho=0$.
\end{prop}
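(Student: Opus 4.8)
The strategy is to establish the two implications by, respectively, sharpening the near-horizon analysis of Proposition~\ref{prop:NH} and then running it in reverse. For the forward direction, the idea is that the sole source of non-smoothness in the proof of Proposition~\ref{prop:NH} — the matrix $A_{ij}(\lambda)$, which there need not even be continuous at the horizon — becomes harmless under the additional $U(1)\times SU(2)$ assumption. By Lemma~\ref{lem:Hsym}, writing the spacetime near the horizon in the Gaussian null form~(\ref{GNC}), this symmetry forces $h_1=h_2=0$ and $\gamma_{ij}$, hence $q_{ij}$ and $k^i$, to be diagonal with $\gamma_{11}=\gamma_{22}$, $k^1=k^2=0$, $k^3\ne 0$, \emph{for all} $\lambda$ and not merely at $\lambda=0$. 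Thus $B_1=B_2=0$ in~(\ref{B}), so the matrix $A(\lambda)$ solving~(\ref{AB}) is a rotation about the $3$-axis; being such, it commutes with the (now axisymmetric) matrices $q_{ij}(\lambda)$, so~(\ref{hath}) collapses to the \emph{exact} identity $h_{ij}=\lambda\Delta\,q_{ij}$ and~(\ref{omhat}) to $\omega_1=\omega_2=0$, $\omega_3=-h_3/(\lambda\Delta^2)$, with no residual $e^{\tau K_0}$ factors, so the right-hand sides are genuine smooth (analytic) functions of $\lambda$. Since $(\td\rho/\td\lambda)^2=\gamma_{33}/(\lambda\Delta q_{33})=F(\lambda)/\lambda$ with $F$ smooth (analytic) and $F(0)>0$, integrating shows $\rho^2$ is a smooth (analytic) function of $\lambda$ with $\rho^2=4F(0)\lambda+O(\lambda^2)$, which inverts to give $\lambda$ as a smooth (analytic) function of $\rho^2$ vanishing linearly. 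Composing, $a^2=h_{11}=fq_{11}$, $c^2=h_{33}=fq_{33}$, $f=\lambda\Delta$ and $\rho^2\omega_3=-(\rho^2/f)\,h_3/\Delta$ are all smooth (analytic) in $\rho^2$, with leading coefficients read off exactly as in the computation following Proposition~\ref{prop:NH}, hence equal to $\alpha^2$, $4\alpha^4$, $\mathring f$ and $\mathring\omega_3$; the relation $c=(a^2)'$ of~(\ref{Ju1su2}) then holds automatically, the base being K\"ahler.

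For the converse, I would reverse this construction, building Gaussian null coordinates explicitly from the base data. Put $\gamma_{33}:=c^2/f-f^2\omega_3^2$ and define $\lambda=\lambda(\rho)$ by $(\td\lambda/\td\rho)^2=c^2/\gamma_{33}$ with $\lambda(0)=0$; using $f=O(\rho^2)$, that $\rho^2\omega_3$ is smooth in $\rho^2$, and the constraints relating $f,\omega_3$ to the base, one checks $\gamma_{33}(0)>0$ and $(\td\lambda/\td\rho)^2=\rho^2\times(\text{smooth, positive in }\rho^2)$, so $\lambda$ is a smooth (analytic) function of $\rho^2$ vanishing linearly and hence invertible. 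Next set $\Delta:=f/\lambda$, $h_1=h_2=0$, $h_3:=-f^2\omega_3/\lambda$, $\gamma_{11}=\gamma_{22}:=a^2/f$, and $\gamma_{33}$ as above; all are smooth (analytic) in $\rho^2$, with $\Delta(0)=\mathring\Delta>0$ and $\gamma_{ij}(0)$ equal to the Gutowski--Reall near-horizon values~(\ref{GRNH}) (positivity using $\alpha>1/2$). Finally introduce a shifted time $v=t+w(\rho)$ and rotate the $S^3$ Euler coordinate, $\psi\mapsto\psi+\Theta(\rho)$ — equivalently, replace $\sigma_3$ by $\hat\sigma_3=\sigma_3+\td\Theta$ and $\sigma_{1,2}$ by their $\Theta$-rotation $\hat\sigma_{1,2}$, which still obey~(\ref{MCsu2}) — with $\Theta'$ fixed by requiring the $\hat\sigma_3\,\td\rho$ cross-term of $g=-f^2(\td t+\omega)^2+f^{-1}h$ to vanish, and $w'$ fixed by requiring the $\td v\,\td\rho$ term to become $2\,\td v\,\td\lambda$. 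A direct computation then shows that the leftover $\td\rho^2$ coefficient vanishes precisely by virtue of the defining ODE for $\lambda$, and that $g$ takes exactly the form~(\ref{GNC}) with the data just constructed; hence $\lambda=0$ is a smooth (analytic) Killing horizon of $V=\partial_v$ carrying the Gutowski--Reall near-horizon geometry, and $V$ is timelike for $\lambda>0$ since $f>0$ there.

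The forward direction is essentially immediate once one observes that the obstruction $A(\lambda)$ of Proposition~\ref{prop:NH} degenerates to an axial rotation and drops out of all the (now axisymmetric) contractions. The substantive work is the converse: one must verify that the various singular pieces created by the coordinate change — the high-order poles of $w'$ and $\Theta'$, and the $\rho^{-2}$ behaviour of $f^{-1}$ inside $f^{-1}h$ — reorganise cleanly into the manifestly regular Gaussian null data, and in particular that the $\td\rho^2$ term cancels. This rests delicately on the precise powers $f=O(\rho^2)$ and $\omega_3=O(\rho^{-2})$, on smoothness in $\rho^2$ rather than $\rho$, and ultimately on the fact that the whole construction is exactly the inverse of the procedure carried out in Proposition~\ref{prop:NH}.
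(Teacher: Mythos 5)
Your proof is correct and follows essentially the same route as the paper: the forward direction specialises Proposition \ref{prop:NH} exactly as the paper does (with $B_1=B_2=0$ the matrix $A$ reduces to a rotation about the $3$-axis that commutes with the axisymmetric $q_{ij}$ and drops out of (\ref{hath}), (\ref{omhat}), so regularity transfers via $\lambda$ being a smooth/analytic function of $\rho^2$), and your explicit Gaussian-null reconstruction for the converse is precisely the ``reversing the above steps'' that the paper invokes, merely spelled out in detail (correctly, since the vanishing of the residual $\td\rho^2$ coefficient is equivalent to your defining ODE $(\td\lambda/\td\rho)^2=c^2/\gamma_{33}$). The only slip is that Lemma \ref{lem:Hsym} is stated in the opposite direction to the way you cite it, but the fact you actually need --- that spacetime $U(1)\times SU(2)$ invariance forces $h_1=h_2=0$ and $\gamma_{ij}$ diagonal with $\gamma_{11}=\gamma_{22}$ for all $\lambda$ --- is immediate from the Gaussian null form and is exactly what the paper's proof asserts directly.
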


\begin{proof}
We may specialise the proof of the previous proposition. The $U(1) \times SU(2)$ symmetry implies that we can write (\ref{GNC}) where $\gamma_{ij}$ must be diagonal with $\gamma_{11}=\gamma_{22}$ and $h_1=h_2=0$ for all $\lambda$. Therefore, the coordinate change defined by (\ref{AB}) simplifies: since $B^1=B^2=0$ the matrix $A_{ij}$ must be of the block diagonal form
\be
A= \begin{pmatrix}  P & 0 \\ 0 & 1 \end{pmatrix} \; ,
\ee
where $P \in SO(2)$. It then follows from (\ref{hath}), (\ref{omhat}) that for all $\lambda>0$,
\be
{h}_{ij} =\lambda \Delta q_{ij}, \qquad {\omega}_i = -\frac{h_i}{\lambda \Delta} \; ,  \label{hhatsym}
\ee
so the matrix $A$ has dropped out of these expressions.  The result now follows upon use of (\ref{lambdarho}) and that $\lambda$ is a smooth (analytic) function of $\rho^2$, which were shown in Proposition \ref{prop:NH}. The converse statement is a result of reversing the above steps.
\end{proof}

It is convenient to restate this result in the chart (\ref{EHgauge}).

\begin{cor} \label{cor:u1su2BH}Consider a timelike supersymmetric and $U(1)\times SU(2)$-invariant solution with a smooth (analytic)  horizon. The horizon corresponds to a conical singularity at $r=0$ in the K\"ahler base (\ref{EHgauge}), and $V(r)>0$,  $ r^{-2} f>0$, $r^2 \omega_3$ are all smooth (analytic) functions of $r^2$ at $r=0$. Conversely, any solution of this form has a smooth (analytic) horizon at $r=0$.
\end{cor}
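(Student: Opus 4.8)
The plan is to obtain Corollary~\ref{cor:u1su2BH} as an essentially mechanical translation of Proposition~\ref{prop:NHu1su2} from the coordinate $\rho$ of (\ref{hsu2u1}) to the coordinate $r=2a(\rho)$ of (\ref{EHgauge}); the only genuine content is the regularity of this change of variables at the horizon.

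\textbf{Forward direction.} By Proposition~\ref{prop:NHu1su2}, near a smooth (analytic) horizon we may put $\rho=0$, with $a^{2},c^{2}$ smooth (analytic) functions of $\rho^{2}$ of the form $a^2=\alpha^2\rho^2+O(\rho^4)$, $c^2=4\alpha^4\rho^2+O(\rho^4)$ (so $\alpha>1/2$), with $\rho^{2}\omega_{3}$ smooth (analytic) in $\rho^2$, and with $f=\mathring f\rho^2+O(\rho^4)$ smooth (analytic) in $\rho^{2}$, $\mathring f>0$. Since $r^{2}=4a^{2}$ is smooth (analytic) in $\rho^{2}$ with $\td(r^2)/\td(\rho^2)|_{\rho=0}=4\alpha^2\neq 0$, the (analytic) inverse function theorem makes $\rho^{2}$ a smooth (analytic) function of $r^{2}$ vanishing at $r=0$. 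Using $a^{2}=r^{2}/4$ and $c^{2}=r^{2}V/4$ from (\ref{Ju1su2})--(\ref{EHgauge}), I would write $V=c^{2}/a^{2}$, $r^{-2}f=f/(4a^2)$, and $r^{2}\omega_{3}=4(a^{2}/\rho^{2})(\rho^{2}\omega_{3})$; in each of the first two, numerator and denominator are $\rho^2$ times a nonvanishing smooth (analytic) function of $\rho^2$ and in the third every factor is smooth (analytic) in $\rho^2$, so all three expressions are smooth (analytic) in $\rho^{2}$, hence in $r^{2}$, with $V(0)=4\alpha^2>0$ and $\lim_{r\to0}r^{-2}f=\mathring f/(4\alpha^2)>0$. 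The conical form of the base at $r=0$ is already contained in Proposition~\ref{prop:NHu1su2}.

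\textbf{Converse.} Given a timelike supersymmetric $U(1)\times SU(2)$-invariant solution with $V>0$, $r^{-2}f>0$, $r^2\omega_3$ smooth (analytic) in $r^2$ at $r=0$, write $V(r)=\hat V(r^2)$ with $\hat V$ smooth (analytic) and positive near $0$. Recalling $r=2a$ and $V=4(a')^2$, the function $a(\rho)$ satisfies (on the branch $a'>0$) the autonomous ODE $a'=H(a)$, $a(0)=0$, where $H(x):=\tfrac12\sqrt{\hat V(4x^2)}$ is smooth (analytic) and even with $H(0)=\tfrac12\sqrt{V(0)}>0$. By (analytic) ODE theory $a(\rho)$ is smooth (analytic), and since $H$ is even, uniqueness applied to $\rho\mapsto-a(-\rho)$ forces $a$ to be odd; hence $a^2$ is smooth (analytic) in $\rho^2$ with $a^2=\alpha^2\rho^2+O(\rho^4)$, $\alpha:=H(0)$. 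Supersymmetry fixes $f^{-1}$ by (\ref{fsol}); Taylor expanding in $r^2$ gives $f^{-1}=\tfrac{\ell^2(V(0)-1)}{3r^2}+O(1)$, so positivity of $\lim_{r\to0}r^{-2}f$ forces $V(0)>1$, i.e.\ $\alpha>1/2$. Then $r^2=4\alpha^2\rho^2+O(\rho^4)$ yields $a^2=\alpha^2\rho^2+O(\rho^4)$, $c^2=r^2V/4=4\alpha^4\rho^2+O(\rho^4)$, $f=\mathring f\rho^2+O(\rho^4)$ with $\mathring f>0$, and $\rho^2\omega_3=(\rho^2/r^2)(r^2\omega_3)$ smooth (analytic) in $\rho^2$ --- exactly the hypotheses of the converse half of Proposition~\ref{prop:NHu1su2}, which then produces a smooth (analytic) horizon at $\rho=0$, i.e.\ at $r=0$.

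\textbf{Main obstacle.} The substance is entirely in the regularity of the map $r\leftrightarrow\rho$: the forward direction is immediate from the inverse function theorem because $a^2$ is smooth in $\rho^2$ with nonvanishing derivative, while the converse rests on the parity/uniqueness argument for $a'=H(a)$ that makes $a$ odd in $\rho$ and hence $a^2$ smooth in $\rho^2$. Everything else is bookkeeping with the identities $a^2=r^2/4$, $c^2=r^2V/4$ and the supersymmetry formula (\ref{fsol}), together with the observation that $r^{-2}f>0$ at the origin is what pins down $V(0)>1$.
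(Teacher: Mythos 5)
Your proposal is correct and follows essentially the same route as the paper: the corollary is obtained from Proposition~\ref{prop:NHu1su2} by the coordinate change $r^2=4a(\rho)^2$, whose regularity (that $\rho^2$ and $r^2$ are smooth/analytic functions of each other, since $a^2=\alpha^2\rho^2+O(\rho^4)$ with $\alpha>1/2$) is the only real content. Your converse merely spells out, via the ODE $a'=H(a)$ and the parity/uniqueness argument, the inversion of the coordinate change that the paper leaves implicit, so it is the same argument in more detail rather than a different approach.
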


This simply follows from the coordinate change $r^2= 4 a(\rho)^2$ together with the form of $a^2$ in Proposition \ref{prop:NHu1su2}, which  imply that $4 \alpha^2 \rho^2 = r^2 (1+ O(r^2))$ where the higher order terms are smooth (analytic) functions of $r^2$.

\subsection{Uniqueness theorem}

In this section we will prove our main result which is Theorem \ref{thm} stated in the Introduction. First we will prove the following symmetry enhancement result.

\begin{prop} \label{prop:symBH}
Any  timelike supersymmetric and $SU(2)$-invariant   solution containing a smooth horizon with compact cross-sections must have $U(1)\times SU(2)$ symmetry.
\end{prop}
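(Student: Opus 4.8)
The plan is to combine the near-horizon expansion of Proposition \ref{prop:NH} with the K\"ahler symmetry-enhancement result Lemma \ref{lem:sym}, so that the proof is in fact very short once the preparatory work of this section is in place.

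First I would establish that the hypotheses of Proposition \ref{prop:NH} are met. Since the solution is $SU(2)$-symmetric and the horizon cross-sections are compact, they are (locally) homogeneous spaces modelled on $SU(2)$, and the classification of supersymmetric near-horizon geometries with compact cross-sections \cite{Gutowski:2004ez, Kunduri:2006uh, Grover:2013hja} forces the near-horizon geometry to be the Gutowski-Reall one (\ref{GRNH}). In particular it has $\mathring{\Delta}>\sqrt{3}/\ell>0$ and enhanced $U(1)\times SU(2)$ symmetry, exactly the input needed for Proposition \ref{prop:NH}. By Theorem \ref{th:kahler}, together with the remark in the paragraph preceding (\ref{abcNH}) that the residual $SO(2)$ rotation freedom surviving the Gaussian-null-coordinate construction suffices to diagonalise the base, the K\"ahler base is the diagonal cohomogeneity-1 $SU(2)$ metric determined by $a,b,c$ obeying (\ref{ap}), (\ref{bp}).

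Next I would read off the consequences at the horizon. Placing the horizon at $\rho=0$, Proposition \ref{prop:NH} gives the expansion (\ref{abcNH}), namely $a=\alpha\rho+O(\rho^3)$, $b=\alpha\rho+O(\rho^3)$, $c=2\alpha^2\rho+O(\rho^3)$ with $\alpha>1/2$. Hence $a,b,c$ are positive and $C^1$ for $\rho>0$, $abc=0$ at $\rho=0$, and $T:=b/a\to 1$ as $\rho\to 0^+$, so in particular $\lim_{\rho\to 0^+}T$ exists and equals $1$. Applying the second part of Lemma \ref{lem:sym} with $\rho_0=0$ (where ``$T=1$ at $\rho=\rho_0$'' is understood in the limiting sense permitted by that lemma's hypotheses) yields $T=1$, i.e.\ $a=b$, for all $\rho>0$. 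By the discussion around (\ref{hsu2u1})--(\ref{Ju1su2}), $a=b$ is precisely the statement that the K\"ahler base has $U(1)\times SU(2)$ symmetry. Finally I would invoke Lemma \ref{lem:Hsym}: since the K\"ahler base has $U(1)\times SU(2)$ symmetry and the solution is timelike outside a smooth horizon, the full spacetime metric, and in particular $\omega$, can be taken $U(1)\times SU(2)$-invariant, which is the desired conclusion.

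The argument itself is elementary; the real content has been front-loaded into this section, so the \emph{main obstacle} is simply verifying that the two enabling results genuinely apply here: that the near-horizon geometry has strictly positive $\mathring{\Delta}$ and enhanced $U(1)\times SU(2)$ symmetry (supplied by the near-horizon classification), and that the base can legitimately be brought into the diagonal form of Theorem \ref{th:kahler} using only the $SO(2)$ freedom left after the near-horizon analysis. Both points were settled in the preceding subsection, so no further work is needed.
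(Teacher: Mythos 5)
Your proposal is correct and follows essentially the same route as the paper: Proposition \ref{prop:NH} (justified via the near-horizon classification) gives the expansion (\ref{abcNH}), hence $T\to 1$ at the horizon, Lemma \ref{lem:sym} then forces $T=1$, i.e.\ $a=b$, for all $\rho>0$, and Lemma \ref{lem:Hsym} upgrades the base symmetry to the full spacetime. The additional care you take in checking the hypotheses of Proposition \ref{prop:NH} and the residual $SO(2)$ diagonalisation matches the paper's own preparatory remarks, so there is no substantive difference.
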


\begin{proof} 
The near horizon behaviour derived in Proposition \ref{prop:NH} in particular implies (\ref{abcNH}) and hence the function (\ref{Tdef}) satisfies 
\be
T= 1+ O(\rho^2), 
\ee
near a horizon $\rho=0$.  For $\rho>0$ the solution is timelike so we may assume the functions $a, b, c>0$. Therefore, Lemma \ref{lem:sym} implies  $T=1$ for all $\rho>0$, i.e. $a=b$ for $\rho>0$.  Therefore the K\"ahler base must have $U(1)\times SU(2)$ symmetry.  Finally, by Lemma \ref{lem:Hsym} we deduce that the spacetime metric must also have $U(1)\times SU(2)$ symmetry.
\end{proof}

We are now ready to prove the black hole uniqueness theorem.  \\

\noindent {\it Proof of Theorem \ref{theorem}}. 
Proposition \ref{prop:symBH} shows that the solution must have $U(1)\times SU(2)$ symmetry.  In the coordinate system (\ref{EHgauge}) the local classification of solutions in this symmetry class in given in section \ref{sec:local}, in terms of a single function $V(r)$ which obeys a 5th order ODE (\ref{ODE}).  Furthemore, the conditions for such solutions to possess an analytic horizon are given in Proposition \ref{prop:NHu1su2} and Corollary \ref{cor:u1su2BH}. In particular, an analytic horizon corresponds to $r=0$ where $V(r)$ is a positive analytic function of $r^2$ and $V(0)>0$. 

 In fact, for the sake of generality we will assume $V(r)$ is an analytic function of $r$, so
\be
V(r)= \sum_{n=0}^\infty \frac{V_n r^n}{n!}  \; .  \label{Vseries}
\ee
We can derive constraints on the first few coefficients by expanding each term in (\ref{ODE}). We find that the vanishing of the $r^1$ term implies $V_1=0$ which then also implies the $r^0$ terms vanish. The $r^2$ terms then give
\be
V_3 \left( V_0 -\frac{32}{11} \right)=0  \label{r2}
\ee
and the $r^3$ terms give
\be
V_4 \left( V_0 -1 \right)=0  \; .  \label{r3}
\ee
In general, this will lead to different branches of solutions. However, for black hole boundary conditions we can derive more constraints.

First, by analyticity, the expansion for $f$ can be obtained from (\ref{fsol}), which gives
\be
f^{-1}= \frac{(V_0-1)\ell^2}{3 r^2} + O(1) \; .
\ee
Therefore, by Corollary \ref{cor:u1su2BH}, which requires $r^{-2}f$ is positive for $r\geq 0$, we deduce $V_0> 1$  (note this also ensures $f>0$  for small $r>0$).  Now, since an analytic horizon requires $V(r)$ to be an analytic function of $r^2$, in particular we must have $V_3=0$.  Therefore the constraints (\ref{r2}), (\ref{r3}) on the coefficients are satisfied provided $V_4=0$ in which case (\ref{ODE}) is satisfied up to order $r^3$.

We will now prove that $V_0>1$ and  $V_3=0$ implies $V_n=0$ for all $n\geq 4$. We proceed by induction and have already verified the base case $n=4$. Thus our induction hypothesis is
\be
V(r)= V_0+\frac{V_2 r^2}{2} + \frac{r^{n} V_{n}}{n!}+ O(r^{n+1}),
\ee
for some $n \geq 4$, where the higher order terms are analytic, and we wish to prove this implies $V_{n}=0$. Substituting this into (\ref{ODE}) we find
\be
\frac{[ 3V_0(n^2-16)+32 (V_0-1)] (n^2-4) V_{n} r^{n-1}}{(n-1)!} + O(r^n)=0  \; .
\ee
Since $V_0>1$ guarantees the factor in the square brackets is positive for all $n\geq 4$, this implies $V_{n}=0$ as required. Therefore, by induction this shows that $V_n=0$ for all $n\geq 4$ as claimed.

Thus we have  shown that the only analytic solution to (\ref{ODE}) with $V_0>1$ and $V_3=0$ is given by
\be
V(r)= V_0+ \frac{V_2}{2} r^2  \; .  \label{Vquad}
\ee
Using (\ref{fsol}) this fixes
\be
f= \frac{6 r^2}{\ell^2[ 2(V_0-1) + 3 r^2 V_2]} \; .
\ee
If $V_2=0$ this is the near-horizon geometry of the Gutowski-Reall black hole with $V_0=4\alpha^2$ (\ref{GRNH}).   If $V_2 \neq 0$, then since the invariant   $f$ must be smooth for $r>0$ we must have $V_2>0$. This corresponds to the Gutowski-Reall black hole  with $V_0=4\alpha^2$ and   $V_2=2/\ell^2$ (\ref{GRBH}).  $\Box$ \\

For this result we  needed to assume a stronger regularity property, namely that the metric at the horizon is analytic. 
It is worth emphasising that there are other solutions to (\ref{ODE}) that are analytic at $r=0$. In particular, the above proof shows two other branches of solutions defined by (recall in all cases $V_1=0$),
\bea
&&\text{Branch B}: \qquad V_0=1, \qquad V_3=0  \label{branchB} \\
&&\text{Branch C}: \qquad V_0=32/11, \qquad V_4=0.
\eea
It can be shown that branch B is uniquely determined in terms of $V_2$ and is an even function of $r$. It has been constructed numerically and corresponds to a smooth soliton with a nut at $r=0$ which is asymptotically locally AdS$_5$ with a squashed $S^3$ at infinity~\cite{Cassani:2014zwa}.  

On the other hand, branch C is determined uniquely in terms of $V_2, V_3$.  It has also been constructed numerically, and corresponds to a black hole solution that is also asymptotically locally AdS$_5$ with a squashed $S^3$ at infinity~\cite{Blazquez-Salcedo:2017kig, Blazquez-Salcedo:2017ghg}. However, since in this case $V(r)$ contains odd powers of $r$, by Corollary \ref{cor:u1su2BH} it does not have a smooth horizon.  In particular the function $V$ is $C^1$ but not $C^2$ as a function of $\lambda$ at the horizon, which implies the metric is $C^1$ but not $C^2$ at the horizon. To see this we calculate the invariant
\be
g_{33}= f^{-1} h_{33}- f^2 \omega_3^2= \ell^2 \left( \frac{455}{1936}-\frac{229 V_2 r^2}{2112}-\frac{1751  V_3 r^3}{6336}\right) + O(r^4) \;,
\ee
and  since $r= 2 a(\rho)= \sqrt{V_0} \rho (1+ O(\rho^2))$ the expansion in terms of $\rho$ takes the same form (up to unimportant numerical factors). Therefore using the coordinate change to Gaussian null coordinates (\ref{lambdarho}) we deduce that for any $V_3\neq 0$ the invariant $g_{33}:=g(L_3, L_3)$ as a function of $\lambda$ is $C^1$ but not $C^2$, as claimed.  An analogous class of black hole solutions with squashed $S^3$ boundary have been constructed in the more general $U(1)^3$ gauged supergravity~\cite{Cassani:2018mlh, Bombini:2019jhp}.  In fact, Propositions \ref{prop:NH} and \ref{prop:symBH} still apply in this more general theory (in particular, the near-horizon geometry has enhanced $U(1)
\times SU(2)$ symmetry), so our results also show that these solutions are $C^1$ but not $C^2$ at the horizon.

It should be noted that to rigorously prove that branch B and C solutions actually exist, one would have to prove that the series defined by (\ref{Vseries}) converges. Of course, one would like to relax the assumption of analyticity. It is natural to expect that the same result should be valid for smooth horizons. While parts of our arguments, in particular Proposition \ref{prop:symBH}, only require smoothness, our proof that (\ref{Vquad}) is the only solution to (\ref{ODE}) with the required boundary conditions uses analyticity in an essential way.  The status of this assumption therefore remains unclear.

\section{Discussion}

In this paper we have determined the general form for a timelike supersymmetric solution to five-dimensional minimal supergravity, under the additional assumption of an $SU(2)$-symmetry with 3d orbits.  This class is governed by a system of ODEs which in general do not appear integrable. However, under certain boundary conditions we found that the system implies an enhancement of symmetry to $U(1)\times SU(2)$ invariant solutions. This is a much simpler class that is governed by a  non-linear 5th order ODE of a single function. We also performed a near-horizon analysis for supersymmetric black holes with such an $SU(2)$ symmetry. This shows that a horizon manifests itself as a conical singularity in the K\"ahler base space (this also occurs for extremal black holes in flat-space~\cite{Reall:2002bh, Lucietti:2020ryg}). Crucially, this implies the required boundary conditions for the symmetry enhancement result so the solution must have $U(1)\times SU(2)$. Therefore the classification of black holes simplifies  and we were able to identify all analytic solutions to the 5th order ODE which governs this class, showing that the Gutowski-Reall black hole is the unique solution.

We emphasise that our symmetry assumptions are compatible with black holes with $S^3$ or lens space topology only. On the other hand, the only global assumption we make is that the supersymmetric Killing field is  timelike outside the horizon, so in particular our result applies equally to asymptotically globally AdS$_5$ and locally AdS$_5$ spacetimes. We have therefore shown that the Gutowski-Reall black hole is the only solution that is asymptotically globally AdS$_5$, in particular we deduce that there are no other solutions in this symmetry class  (either connected or disconnected to the Gutowski-Reall solution). Furthermore, we have also shown that there are no regular black hole solutions in this symmetry class that are asymptotically locally AdS$_5$ (at least assuming the metric at the horizon is analytic).

It is worth placing these results in the more general context of black hole classification.  In higher-dimensions the rigidity theorem implies that any stationary rotating black hole solution must have $U(1)^s$ rotational symmetry where $s \geq 1$~\cite{Hollands:2006rj, Hollands:2008wn}.\footnote{Technically, this has not been established for higher-dimensional extremal black holes in full generality.} In fact, this general result applies to  asymptotically flat and AdS spacetimes. Motivated by the rigidity theorem, it is expected that black holes with a single $U(1)$ symmetry may exist~\cite{Reall:2002bh}. On the other hand, in five-dimensions (asympotically flat, or globally AdS) all known explicit solutions possess the maximal abelian rotational symmetry $U(1)^2$ that is compatible with the $SO(4)$ rotation group at infinity.  Now, our assumption of $SU(2)$ symmetry generically only contains solutions with a $U(1)$ abelian rotational symmetry. Therefore, our black hole uniqueness theorem (in fact Proposition \ref{prop:symBH}), in this restricted context, implies the existence of a second commuting rotational symmetry, i.e. the abelian symmetry is enhanced to $U(1)^2\subset U(1) \times SU(2)$. While it is likely this is an artefact of the $SU(2)$ symmetry, it is an interesting open problem whether this enhancement of rotational symmetry is more generic or if there are supersymmetric black solutions with exactly $\mathbb{R}\times U(1)$ symmetry.  

The assumption of an $SU(2)$ rotational symmetry is expected to constrain the possible rotational configurations. Indeed, all known $D=5$ asymptotically flat/AdS black hole (and soliton) solutions with $SU(2)$ symmetry possess equal angular momenta $J_1=\pm J_2$ with respect to the orthogonal $U(1)^2$ Killing fields at infinity.  However, the converse statement need not be true, i.e. $J_1 = \pm J_2$ does not necessarily imply $SU(2)$ symmetry. In fact, for asymptotically flat solutions there are several known supersymmetric black holes with $J_1=\pm J_2$ that only have $U(1)^2$-symmetry, which explicitly shows that symmetry enhancement based on naive kinematics does not hold~\cite{Kunduri:2014iga, Horowitz:2017fyg, Breunholder:2018roc}.

As mentioned in the introduction a major open problem in this context is to also determine the (non)existence of black holes with non-spherical topology. Unfortunately, our strong symmetry assumption means the solutions are cohomogeneity-1, so does not allow us to address this question. In particular, it does not allow for multi-black holes so our work does not address their existence. Furthermore, while the symmetry assumption is compatible with lens space horizons, it does not allow for solutions with lens space horizons that are asymptotically globally AdS$_5$ (i.e. have a $S^3$ at infinity), and therefore we cannot address the existence of black lenses in this context.  In order to address these questions requires an analysis of cohomogeneity-2  solutions. We leave this to future work. \\

\noindent {\bf Acknowledgements}.  JL is supported by a Leverhulme Trust Research Project Grant. SO is supported by a Principal's Career Development Scholarship at the University of Edinburgh. We would like to thank Praxitelis Ntokos for helpful comments.

\appendix

\section{$SU(2)$ calculus}
\label{app:su2}
 
 Consider $G=SU(2)$ with the natural left and right $G$-action on $G$.
 Let $L_i$ be the generators of the left-action and $R_i$  be the generators of the right-action so,
 \be
 [L_i, L_j]= \epsilon_{ijk} L_k, \qquad [R_i, R_j]=-\epsilon_{ijk} R_k, \qquad [L_i, R_j]=0  \; ,
 \ee
 for $i, j, k= 1,2,3$. Thus $L_i$ ($R_i$) are right (left) invariant vector fields. The dual 1-forms, defined by $\sigma^L_i(L_j)=\delta_{ij}$ and $\sigma^R_i(R_j)=\delta_{ij}$, are right-invariant $\mathcal{L}_{R_i}\sigma^L_j=0$ and  left-invariant $\mathcal{L}_{L_i} \sigma^R_j=0$, and obey the Maurer-Cartan equations 
 \be
 \td \sigma^L_i=- \tfrac{1}{2} \epsilon_{ijk} \sigma^L_j \wedge \sigma^L_k, \qquad  
 \td \sigma^R_i= \tfrac{1}{2} \epsilon_{ijk} \sigma^R_j \wedge \sigma^R_k.
 \ee
 It follows that $\mathcal{L}_{{L}_i} \sigma^L_j = - \epsilon_{ijk} \sigma^L_k$ and $\mathcal{L}_{{R}_i} \sigma^R_j =  \epsilon_{ijk} \sigma^R_k$.
 
 It is useful to have an explicit coordinate system. We use Euler angles $(\theta, \phi, \psi)$, which (almost) cover $S^3\cong SU(2)$ if $0<\theta<\pi$,  $0<\phi<2\pi$, $0<\psi<4\pi$. In this coordinate system the right-invariant 1-forms can be written as
\bea
&&\sigma^L_1= \sin \psi \td \theta-\cos\psi\sin \theta \td \phi, \nonumber  \\   \nonumber
&&\sigma^L_2= \cos \psi \td \theta +\sin\psi\sin \theta \td \phi,\\ &&\sigma^L_3= \td \psi+ \cos\theta \td \phi  \; ,
\eea
and the dual vectors are
\bea
&&L_1= \cot \theta \cos\psi \partial_\psi+ \sin\psi \partial_\theta  -\frac{\cos\psi}{\sin\theta} \partial_\phi, \nonumber  \\  &&L_2= -\cot \theta \sin\psi \partial_\psi+ \cos\psi \partial_\theta  + \frac{\sin\psi}{\sin\theta} \partial_\phi, \nonumber \\  &&L_3 =\partial_\psi  \; .
\eea
The left-invariant 1-forms can be written as
\bea
&&\sigma^R_1=- \sin \phi \td \theta+\cos\phi\sin \theta \td \psi, \nonumber  \\   \nonumber
&&\sigma^R_2= \cos \phi \td \theta +\sin\phi\sin \theta \td \psi,\\ &&\sigma^R_3= \td \phi+ \cos\theta \td \psi  \; ,
\eea
and the dual vectors are
\bea
&&R_1=- \cot \theta \cos\phi \partial_\phi- \sin\phi \partial_\theta  +\frac{\cos\phi}{\sin\theta} \partial_\psi, \nonumber  \\  &&R_2= -\cot \theta \sin\phi \partial_\phi+ \cos\phi \partial_\theta  + \frac{\sin\phi}{\sin\theta} \partial_\psi, \nonumber \\  &&R_3 =\partial_\phi  \; .
\eea

 \section{Examples: K\"ahler metrics and solitons with a bolt}
 \label{app:bolts}
 
In this section we will study regularity of various examples of K\"ahler metrics with $U(1)\times SU(2)$ symmetry and of the associated supersymmetric solutions they define. We will assume the supersymmetric solutions have $U(1)\times SU(2)$ symmetry, i.e. $\omega_1=\omega_2=0$.

\subsection{K\"ahler-Einstein base}

The most general K\"ahler-Einstein metric (normalised so $R_{ab}=- \frac{6}{\ell^2} g_{ab}$) with $U(2)$ symmetry is given by 
\be
V(r)=1+ \frac{r^2}{\ell^2} +\frac{c_4}{r^4}  \; .
\ee
This  gives a supersymmetric solution with 
\be
f=1, \qquad \omega_3 = \frac{r^2}{2\ell} \; .
\ee
For $c_4=0$ this is the Bergmann metric and for $c_4>0$ there is a  singularity at $r=0$. 

For $c_4<0$, there is a largest real root $r_0>0$ such that $V(r_0)=0$.  One can write the solution in terms of $r_0$ as
\be
V(r)= \frac{(r^2-r_0^2)( r^2(r^2+\ell^2)+ r_0^2(r^2+r_0^2+\ell^2))}{\ell^2 r^4}  \; ,
\ee
which gives $r_0 V'(r_0)=4+ 6 r_0^2/\ell^2$. Therefore, regularity at $r=r_0$ requires it to be a bolt (\ref{EHbolt}) so
\be
p= 2+ \frac{3 r_0^2}{\ell^2}\; .
\ee
Note for $\ell\to \infty$ this reduces to $p=2$ as it should for the  Eguchi-Hanson metric.  However, for $\ell>0$ we see that there is a solution for every $p>2$ so we obtain corresponding smooth K\"ahler metrics.  Note that constant $r$ surfaces are squashed $S^3/\mathbb{Z}_p$.

In fact the corresponding supersymmetric solution is singular or has CTC. This is because regularity requires $\omega_3(r_0)=0$ which is never possible.  Alternatively, note that the Killing field $\partial_\psi - (r_0^2/(2\ell)) \partial_t$ has a fixed point in the spacetime at $r=r_0$ and so must have closed orbits which implies $t$ must be periodically identified.

\subsection{A class of soliton solutions}

Consider the ansatz
\be
V= c_0+ \frac{r^2}{\ell^2}+ \frac{c_2}{r^2} + \frac{c_4}{r^4}  \; .  \label{ansatz}
\ee
We find that this is a solution to (\ref{ODE}) iff
\be
c_2^2= 3 (c_0-1) c_4  \; .
\ee
The special case $c_0=1$ gives the K\"ahler-Einstein solution. These bases give
\bea
&&f = \frac{3 r^2}{(c_0-1)\ell^2+ 3 r^2}, \\
&&\omega_3=\frac{2 (c_0-1) c_2 \ell^4+(3 (c_0-1)^2 \ell^4 +9 c_2 \ell^2 )r^2+18 (c_0-1) \ell^2 r^4+18 r^6}{36 \ell r^4}  \; .
\eea
 All these solutions are asymptotically  locally AdS$_5$ in the sense that the K\"ahler base is asymptotically locally Bergmann, $f\sim 1$ and $\omega_3 \sim  r^2/(2\ell)$ as $r \to \infty$.

As $V(r)$ is not smooth at $r=0$ this solution cannot correspond to a black hole solution or a soliton with a nut (recall our earlier general analysis showed that in both cases smoothness implies $V(r)$ is a smooth positive function of $r^2$, see Section \ref{sec:NB} and Corollary \ref{cor:u1su2BH}). Therefore, the only possibility is that is has a bolt at $r=r_0>0$. 

To analyse this, it is convenient to change parameterisation and write
\be
V= \frac{(r^2-r_0^2) ( a_0+ a_1 r^2+ \frac{r^4}{\ell^2})}{r^4}  \; ,\label{alt}
\ee
so that one of our parameters is the root $r_0>0$. This is a solution to (\ref{ODE}) iff
\be
a_1^2 r_0^4+(a_1-3) a_0 r_0^2+a_0^2- \frac{3 a_0 r_0^4}{\ell^2}=0  \; .  \label{sol}
\ee
The regularity condition at the bolt (\ref{EHbolt}) fixes
\be
 a_1=p - \frac{r_0^2}{\ell^2} -\frac{a_0}{r_0^2}   \; ,
\ee
where recall $p\in \mathbb{N}$. These conditions give a family of complete K\"ahler metrics with a bolt.

Now consider a supersymmetric solution with such a base.  We must also require that $\omega$ is smooth at the bolt. In particular, this requires that $\omega_3(r_0)=0$. One finds that this gives another quadratic equation in $a_0, a_1$ and combining this with (\ref{sol}) implies
\be
a_0  = \frac{\ell^2 \left(2 p^2-4 p+3\right) r_0^2+(p-8) r_0^4}{\ell^2 (p+1)} \; .
\ee
Substituting back into (\ref{sol}) (or $\omega_3(r_0)=0$) finally gives
\be
27 x^2- (p-2) \left(p^2+14 p-5\right) x+(p-2)^3 p=0  \; ,
\ee
where we have defined the dimensionless parameter $x := r_0^2/\ell^2>0$. The two roots are given by
\be
x_\pm(p)= \frac{1}{54}(p-2) \left( p^2+14 p -5 \pm (1+p)\sqrt{(1+p)(25 +p)} \right)
\ee
and it can be shown that $x_+(p)>x_-(p)>0$ for all $p>2$, $x_-(1) \approx 0.08$ and $x_+(1)<0$ (clearly $p=2$ gives a trivial solution $x=0$).\footnote{For $r_0=0$ the above local solution has $a_0=0$ and reduces to the Gutowski-Reall case (Bergmann for $a_1=1$).} Each positive value of $x$ gives a smooth spacetime with a bolt at $r=r_0$  if $V>0$ for all $r>r_0$ and $f>0$ for $r \geq r_0$.

For $p=1$, the only allowed solution is $x_-(1)$ which does give $V>0$ for $r>r_0$, however in this case $f(r_0)<0$ so this does not give a soliton spacetime.  Similarly, for $p\geq 3$ the solution $x_-(p)$ always gives $f(r_0)<0$ and therefore must be discarded.

However,  for $p\geq 3$ one can show that the $x_+(p)$ solution gives $V>0$ for $r>r_0$  and $f>0$ for $r \geq r_0$. For example, for $p= 3$ and $x=x_+(3)= \left(8 \sqrt{7}+23\right)/27$ we get the solution
\be
V= (r^2-r_0^2)\left( \frac{1}{\ell^2}+\frac{2(13+\sqrt{7})}{27 r^2} + \frac{2 \left(13 \sqrt{7}+88\right) \ell^2}{729 r^4} \right)
\ee
and
\bea
f^{-1} &=& \frac{r^2-r_0^2+\frac{1}{9} \left(2 \sqrt{7}+5\right) \ell^2}{r^2} \\
\omega_3 &=& (r^2-r_0^2)\left( -\frac{2 \left(7 \sqrt{7}+10\right) \ell^3}{729 r^4}+\frac{\left(2 \sqrt{7}-1\right) \ell}{54 r^2}+\frac{1}{2 \ell}\right) 
\eea
which manifestly satisfies $V>0$ for $r>r_0$ and $f>0$ for $r\geq r_0$. This gives an asymptotically  AdS$_5/ \mathbb{Z}_3$ soliton with a bolt at $r=r_0$ and a spatial $S^3/\mathbb{Z}_3$ boundary metric.  More generally, for all $p \geq 3$ the solution $x_+(p)$ gives asymptotically AdS$_5/ \mathbb{Z}_p$ solitons with a bolt with spatial $S^3/\mathbb{Z}_p$ boundary metric.

\subsection{Extremal K\"ahler metrics}

Although out of the main line of this paper, we have obtained a class of extremal K\"ahler metrics analogous to the solution above.
The equation for an extremal K\"ahler metric is~\cite{extremal}
\be
 \nabla^c \left( \nabla_c \nabla^2 R+ 2R_{cb}\nabla^b R \right)=0 \; .  \label{extremal}
\ee
This in fact comprises two of the terms in the equation for supersymmetry (\ref{PDE}), although any relation remains unclear.

In any case we can find a family of $U(1)\times SU(2)$ invariant extremal K\"ahler metrics (\ref{EHgauge}) again using the ansatz  (\ref{ansatz}). We now find this is extremal iff  $c_0=1$. Thus
\be
V= 1+\frac{r^2}{\ell^2} + \frac{c_2}{r^2}+\frac{c_4}{r^4}
\ee
gives a 2-parameter family of extremal K\"ahler metrics. For $c_2=0$ this is K\"ahler-Einstein.  In general we find
\be
R= -\frac{24}{\ell^2}
\ee 
so these are constant scalar curvature (note both $c_2, c_4$ do not appear in the scalar curvature).  In fact, the most general $U(1)\times SU(2)$ K\"ahler metric with constant scalar curvature $R=-24/ \ell^2$  is given by the above solution.  We emphasise that these are not solutions to (\ref{ODE}) unless $c_2=0$ and therefore do not give rise to supersymmetric solutions.

To analyse regularity it is convenient to use the alternate parameterisation (\ref{alt}). The extremal condition is now 
\be
a_1=1+\frac{r_0^2}{\ell^2}
\ee
and the regularity condition (\ref{EHbolt}) at the bolt implies
\be
a_0= -r_0^2 \left(1-p+\frac{2r_0^2}{\ell^2} \right)
\ee
where $p\in \mathbb{N}$. This gives
\be
V= \frac{(r^2-r_0^2)}{r^4} \left(\frac{(r^2+2r_0^2) (r^2-r_0^2)}{\ell^2} + r^2 + r_0^2(p-1)\right)  \; .
\ee
Clearly $V(r)>0$ for all $r>r_0$ so this is  a globally regular metric for any $r_0>0$ and any $p\in \mathbb{N}$. These are complex-line bundles over $S^2$ associated to circle-bundles over $S^2$ with Chern number $p$. For $\ell \to \infty$ these metrics coincide with the scalar flat K\"ahler metrics of LeBrun (his $a=r_0$ and his $k=p-1$)~\cite{lebrun}.

\end{document}